\documentclass{sn-jnl}
\usepackage{amssymb,amsthm,amsmath,amsfonts}
\usepackage{cite}
\usepackage{mathrsfs}
\usepackage{graphicx}
\usepackage{multirow}%
\usepackage[title]{appendix}%
\usepackage[dvipsnames]{xcolor}%
\usepackage{textcomp}%
\usepackage{manyfoot}%
\usepackage{booktabs}%
\usepackage{algorithm}%
\usepackage{algorithmicx}%
\usepackage{algpseudocode}%
\usepackage{listings}%
\usepackage{natbib}

\theoremstyle{definition}
\newtheorem{theorem}{Theorem}
\newtheorem{lemma}[theorem]{Lemma}

\newtheorem{example}[theorem]{Example}
\newtheorem{proposition}[theorem]{Proposition}

\newtheorem{definition}{Definition}
\newtheorem{remark}[theorem]{Remark}
\newtheorem{informalquestion}[theorem]{Informal Question}
\def\path{\mathrm{path}}
\newcommand{\red}[1]{#1}
\newcommand{\redder}[1]{#1}

\begin{document}

\title[Specieslike clusters]{Specieslike clusters based on identical ancestor points}

\author{Samuel Allen Alexander}

\maketitle

\vspace{-1cm}
\begin{center}
    \emph{Accepted for publication in the Journal of Mathematical Biology}
\end{center}
\vspace{.75cm}

\begin{abstract}
\\
\noindent $\mbox{\textbf{Abstract:}}$
We introduce several axioms which may or may not hold for any given subgraph of the directed graph of all organisms (past, present and future) where edges represent biological parenthood, with the simplifying background assumption that life does not go extinct. We argue these axioms are plausible for species: if one were to define species based purely on genealogical relationships, it would be reasonable to define them in such a way as to satisfy these axioms. The main axiom we introduce, which we call the identical ancestor point axiom, states that for any organism in any species, either the species contains at most finitely many descendants of that organism, or else the species contains at most finitely many non-descendants of that organism. We show that this (together with a convexity axiom) reduces the subjectivity of species, in a technical sense. We call connected sets satisfying these two axioms ``specieslike clusters.'' We consider the question of identifying a set of biologically plausible constraints that would guarantee every organism inhabits a maximal specieslike cluster subject to those constraints. We provide one such set consisting of two constraints and show that no proper subset thereof suffices.
\end{abstract}

\section{Introduction}

Studying a two-parent analog of the Wright-Fisher population model
with constant population size $n\gg 1$,
Chang observed \citep{chang1999recent} that with high probability,
everyone alive at least $1.77\log_2(n)$ generations ago was either a
common ancestor of all present-day individuals or a common non-ancestor of
all present-day individuals. More realistic simulations confirmed that
probably within the past ten thousand years, there was a point at which
every living human was either a common ancestor or a common non-ancestor
of all humans alive today; the most recent such time is called the
identical ancestor point \cite{rohde2004modelling} (implicitly: for
\emph{\red{H}omo sapiens} as of the date of that paper's publication).

Chang suggested, in \cite{chang1999recent}
and \cite{changreply}, that this identical ancestor point could
help us better understand species trees ``with a reality that is not tied to
or derived from gene trees'' but purely from genealogy.
From the idea of identical ancestor points, we derive a certain
axiom (which we call the \emph{identical ancestor point axiom})
that can be imposed upon
sets of organisms: some organism-sets satisfy the axiom
and some do not.
\red{We argue that it is plausible that species satisfy this axiom}.

We show that the identical ancestor point axiom plus a certain weak
convexity axiom already uniquely determine species up to a certain
equivalence relation. \red{Connected sets which satisfy these two axioms,
we call \emph{specieslike clusters}. These unfortunately include some very small
sets (such as singletons), which would not usually make good species candidates
(although they can, in some cases, for example if a parentless organism arises from
spontaneous biogenesis and immediately dies). Thus it is natural to consider
maximal specieslike clusters. Unfortunately, there is no guarantee that every
organism inhabits a maximal specieslike cluster.
This, in turn, leads us to ask what additional reasonable
constraints can be imposed so as to guarantee every
organism inhabits a maximal such cluster. We give one such set of constraints
in Theorem \ref{atomicspeciesexistencethm}.}

Given only the birthdates and parent-child relationships of a set of organisms
(a directed graph where edges are directed from parents to children and every
vertex has a real number birthdate),
we would like to classify those organisms into species. How can we do so?
What abstract properties should species satisfy, and what properties can we
assume about the larger graph? In other words, what
distinguishes such biological graphs and their species?
For example, an organism should never be born earlier than its own parents.
One might be tempted to suggest
that organisms are always the same species as their parents, but this would
have undesirable consequences: if the full biological graph is connected, then
this constraint would imply that every species is empty or contains every
organism.

We consider the above questions under the simplifying assumption that the
biological graph of all organisms is infinite, i.e., that life will never fully
go extinct. Alexander pointed out \cite{alexander2013infinite} that
this simplifying assumption is implicit in certain parts of Darwin's
\emph{Origin of Species} \cite{darwin2004origin} and Hennig's
\emph{Phylogenetic Systematics} \cite{hennig1999phylogenetic}.
The idea of including \emph{future} as well as present and past organisms
in our study of species was made explicit by D.J.\ Kornet
\cite{kornet1993permanent,kornet1995internodons,kornet2005composite}.

The layout of the paper is as follows.
\begin{itemize}
    \item
    In Section \ref{biospheresection} we impose some assumptions on the
    parent-child graph of all living organisms.
    \item
    In Section \ref{speciesaxiomssection} we introduce \red{two} species axioms:
    the \emph{identical ancestor point axiom}
    (motivated by identical ancestor points) \red{and}
    the \emph{convexity axiom}\red{.}
    \red{We define \emph{specieslike clusters} to be connected
    sets of organisms satisfying both axioms.}
    % \item
    % In Section \ref{iapsection} we discuss the identical ancestor point property.
    %We also discuss its relationship to the so-called Knight-Darwin Law.
    \item
    In Section \ref{objectivespeciessection} we show that\red{,
    if we ignore organisms whose descendants go extinct, then
    any two infinite specieslike clusters either have almost no overlap
    with eachother or else have almost complete overlap with eachother.
    In a sense, this reduces the subjective nature
    of species definition.}
    \item
    In Section \ref{genericitysection} we develop a general framework for ensuring
    existence of maximal or minimal organism-sets
    subject to \red{various constraints}.
    \item
    In Section \ref{inclusivespeciessectn} we
    \red{prove that every organism inhabits a maximal specieslike cluster
    subject to two additional constraints which we call the \emph{common ancestor
    property} and the \emph{reflection property}}.
    \item
    In Section \ref{objectionssectn} we address some anticipated objections
    \red{against the identical ancestor point axiom}.
    \item
    In Section \ref{summarysection} we summarize and make concluding remarks.
\end{itemize}

\section{Infinite Biospheres}
\label{biospheresection}

``The infinite exhibits itself in different ways--in time, in the generations of man, and in the division of magnitudes.''---Aristotle

A key informal criterion of species is that members of the same species should be
able to have common descendants. Such hypothetical abilities, however, are not
captured by the raw structure of parent-child relations and birthdates. Yes, if
two vertices share a child, we can say they are capable of having common descendants,
but the converse is not so clear. Given only that two organisms have no common
descendant as of today, we cannot conclude that they will never have a common
descendant. Thus, if we are to define species solely in terms of parent-child
relationships and birthdates, it seems these definitions must depend not only
on past and present parent-child relations, but also on future such
relations.
And indeed, not just finitely far in the future, but infinitely far in the future,
for if all life goes extinct at a certain date, then no organisms alive at that
date can possibly share descendants, which renders species classification trivial.
Thus, as simplifying assumption, we shall assume our biological graphs are infinite.

To be clear, we are not claiming that life in reality will necessarily last
forever. Rather, we are considering the infinitude of future life as a simplifying
assumption in the same way that physicists assume translation symmetry in crystals.
No crystal in the real world is infinite, but it is convenient to act as if crystals
are infinite, and the resulting translation symmetries help us understand physics.
In the same way, even if life cannot continue forever in reality, by assuming that
it does continue forever, we can gain insight into the combinatorial
structure of life. See \cite{alexander2013infinite} for additional justification
for this simplifying assumption. \red{We can see just how casually Darwin
incorporated infinity in his biological intuition from language such as
that passage in his
\emph{Voyage of the Beagle} where he writes ``their numbers were infinite;
for the smallest drop of water which I could remove contained very many,''
\cite{beagle} as if infinitely many co-contemporary organisms were present
in the vicinity of the ship! Clearly he did not literally believe that, but
it illustrates he had no fear of infinitary simplifying assumptions.}

In addition to this infinitary assumption, we will make some additional assumptions.
We assume organisms have real number birthdates,
that parents are born before children,
and that at most finitely many organisms have birthdates earlier than any
given real number. Finally, we assume no organism has infinitely many children.

In the context\footnote{In representing the set of all living organisms
as a directed graph, we follow \cite{dress2010species}.}
of a directed graph $G$, we will use the following terminology.
We will say vertex $v$ is a \emph{parent} of vertex $w$ (and that $w$ is a
\emph{child} of $v$) if there is an edge directed from $v$ to $w$.
We will say $v$ is an \emph{ancestor} of $w$ (and that $w$ is a \emph{descendant}
of $v$) if there is a sequence $v=v_1,\ldots,v_n=w$ (with $n>1$)
such that each $v_i$
is a parent of $v_{i+1}$. By abuse of notation, we will write $v\in G$ to
indicate that $v$ is a vertex of $G$, and we will write $S\subseteq G$ (and call
$S$ a \emph{subset} of $G$) if $S$ is a subset of the set of vertices of $G$
\red{(we consider such an $S$ to also be a graph, with the same edges between its
vertices as in $G$)}.

\begin{definition}
\label{infinitebiospheredefn}
    An \emph{infinite biosphere} is \red{a directed graph $G$} along with
    a birthdate function $t$
    assigning a birthdate $t(v)\in \mathbb R$ to every vertex $v$ of $G$,
    such that:
    \begin{enumerate}
        \item Whenever $v$ is a parent of $w$, then $t(v)<t(w)$.
        \item For every $r\in\mathbb R$, at most finitely many $v\in G$
            have $t(v)<r$.
        \item Every vertex has at most finitely many children.
        \item $G$ is infinite.
    \end{enumerate}
\end{definition}

We will not directly use the following lemma, but we will frequently use
reasoning similar to its proof.

\begin{lemma}
\label{koniglemma}
    (K\"onig's Lemma)
    If $G$ is an infinite biosphere and $v\in G$ has infinitely many descendants,
    then there is an infinite directed path in $G$ starting at $v$.
\end{lemma}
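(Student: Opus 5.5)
We build the path greedily, one vertex at a time, keeping the invariant that the current endpoint has infinitely many descendants. This is the standard K\"onig argument. The only biosphere axiom needed is condition 3 (finitely many children). Acyclicity, which follows from condition 1, guarantees that the vertices we pick are distinct, so we really obtain a path.

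\textbf{Key step.} Suppose $w\in G$ has infinitely many descendants. We claim some child of $w$ also has infinitely many descendants. Every descendant $u$ of $w$ is reached by a sequence $w=v_1,\ldots,v_n=u$ with $n>1$. So $u$ is either a child of $w$ (when $n=2$) or a descendant of the child $v_2$. Hence the set of descendants of $w$ is contained in
\[
C\cup\bigcup_{c\in C}D(c),
\]
where $C$ is the set of children of $w$ and $D(c)$ is the set of descendants of $c$.

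By condition 3 of Definition \ref{infinitebiospheredefn}, $C$ is finite. This right-hand side is therefore a finite union of sets together with a finite set. If every $D(c)$ were finite, the descendants of $w$ would form a finite set, a contradiction. So some child $c$ of $w$ has $D(c)$ infinite.

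\textbf{Construction.} Set $v_1=v$. Given $v_i$ with infinitely many descendants, apply the key step to choose a child $v_{i+1}$ of $v_i$ that also has infinitely many descendants. Fixing a well-ordering of each finite set of children, for instance by choosing the least eligible child by birthdate with ties broken arbitrarily, avoids appealing to dependent choice in any essential way. By induction this yields an infinite sequence $v=v_1,v_2,\ldots$ in which each $v_i$ is a parent of $v_{i+1}$.

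\textbf{Distinctness.} By condition 1, $t(v_1)<t(v_2)<\cdots$. The vertices are therefore pairwise distinct, and the sequence is a genuine infinite directed path starting at $v$.

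There is no real obstacle. The only point needing care is the decomposition in the key step: the descendants of $w$ are exactly its children together with the descendants of its children. This follows directly from the definition of descendant via parent-sequences of length $n>1$.
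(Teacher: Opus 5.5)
Your proof is correct and is the paper's argument: start at $v$, use condition 3 (finitely many children) to pick a child that still has infinitely many descendants, and iterate forever. You add two details the paper leaves implicit, namely the decomposition of the descendants of $w$ into its children and their descendants, and the use of condition 1 to show the vertices are distinct; the aside about avoiding dependent choice is unnecessary (breaking ties arbitrarily is itself a choice), but it does not affect correctness.
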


\begin{proof}
    Let $v_1=v$. Since $v_1$ has infinitely many descendants in $G$,
    and since $v_1$ has only finitely many children, some child $v_2$ of $v_1$
    has infinitely many descendants in $G$. Likewise, some child $v_3$ of $v_2$
    has infinitely many descendants in $G$. This process continues forever,
    defining an infinite directed path $v_1,v_2,v_3,\ldots$ in $G$ starting at $v$.
\end{proof}

\section{\red{Two} Species Axioms}
\label{speciesaxiomssection}

In this section we will introduce \red{two} axioms, each of which may or may not
apply to any particular set of organisms.
\red{We propose that it is plausible that species satisfy these axioms.}
The most important, and perhaps
most controversial, which we call the identical ancestor point axiom, is an
axiom derived from the idea of identical ancestor points.

\subsection{The identical ancestor point axiom}

\red{One key aspect of a species, identified by
Hennig \cite{hennig1999phylogenetic}, is that a species should not
\red{split} \red{(Kornet later pointed out \cite{kornet1993permanent}
that a species can
temporarily split, and presumably Hennig was talking about
permanent splits)}. But Hennig did not attempt to formally explain
what is meant by this. The following axiom serves as a strong
formalization of lack of permanent splits (for if a species permanently
split into two pieces, an organism could inhabit one of the two pieces
and have offspring in that piece indefinitely, without ever having
offspring in the other piece).}

\begin{definition}
\label{udadefn}
    A set $S\subseteq G$ of organisms satisfies the \emph{identical ancestor point
    axiom} if the following requirement holds. For every $v\in S$,
    \red{at least} one of the following statements is true:
    \begin{itemize}
        \item All but finitely many members of $S$ are descendants of $v$.
        \item All but finitely many members of $S$ are non-descendants of $v$.
    \end{itemize}
    In other words, $S$ satisfies the identical ancestor point axiom if and
    only if there is no $v\in S$ such that $S$ contains both infinitely many
    descendants of $v$ and infinitely many non-descendants of $v$.
\end{definition}

We consider the identical ancestor point axiom to be biologically plausible
as a species axiom. We will give five informal arguments for this position.
These five arguments are not rigorous, but that is okay, as we are only
attempting to motivate the axiom, not to prove it.
To be clear, the theorems in our paper are true whether or not species really
satisfy the identical ancestor point axiom; they are true even if (as some
authors have claimed) species don't actually exist at all. But we hope the
following five arguments will at least make it plausible that species really
might satisfy the axiom.

\textit{First informal argument.}
This first argument, based on identical ancestor points, is what originally
motivated us to formulate the axiom.
Assuming a \red{two-parent variation of the}
Wright-Fisher population model with population size
$n\gg 1$, Chang showed \cite{chang1999recent} that probably
every individual at least $1.77\log_2(n)$ generations before the current
generation is either a common ancestor of the current generation or a common
non-ancestor of the current generation. For each $k>0$,
let $f(k)<k$ be maximal such that every member of the $f(k)$th generation
is either a common ancestor of the $k$th generation or a common non-ancestor
of the $k$th generation\footnote{The idea of this function $f$ is
hinted at by Hein, who writes: ``Because of the effects of isolation, had
we been living in 1700, say, and tried to work out when our universal and
identical ancestors lived, the answers would have been further back in time
than the answers we obtain now'' \cite{hein2004pedigrees}}.
By Chang's result, probably
$f(k)\geq \lfloor k-1.77\log_2(n)\rfloor$.
The exact formula is of course
not important; the important thing is that
$\lim_{k\to\infty} \lfloor k-1.77\log_2(n)\rfloor=\infty$.
This suggests that $\lim_{k\to\infty}f(k)=\infty$.
If so, then for any particular individual $v$, in, say, generation $k$,
there must be some $k'>k$ such that $f(k')>k$, thus
either $v$ is a common ancestor of
generation $k'$, or $v$ is a common non-ancestor of generation $k'$.
Since only finitely many individuals inhabit generations $<k'$,
the whole population (past, present, and future) cannot contain both
infinitely many descendants of $v$ and infinitely many non-descendants of $v$.
\qed

If the above argument holds for humans,
it has some amusing implications. For example,
Plato reports \cite{alcibiades} that Socrates claimed to be a descendant
of Daedalus (and thus allegedly Zeus) in response to Alcibiades claiming
to be a descendant of Eurysaces (and thus allegedly Zeus).
But \red{if the gods stopped parenting new human children before} the
500 BCE \emph{\red{H}omo sapiens} identical ancestor point,
and if at least one human in Socrates' time was a descendant of Zeus,
then \emph{every} human in Socrates' time was a descendant of Zeus.
Perhaps Socrates was trying to hint at the idea of the identical ancestor point.
That would certainly be more in character for a philosopher known for his
humility, not known for bragging about his pedigree.

The above argument can be extended to species capable of asexual reproduction,
too. Wiuf and Hein conjectured \cite{donnelly1999discussion} a formula similar
to Chang's formula for a variation where individuals randomly have either
one or two parents. See also \cite{linder2009common}.

\textit{Second informal argument.}
The reader might wonder: couldn't we deliberately arrange a failure of
the identical ancestor point axiom? For example, we could launch a colony of
humans into space \red{(having no children on earth at launch-time)},
never again to interact with humans on earth or vice versa.
This would violate the identical ancestor point axiom assuming the earthlings
and the space colonists remained human for all time. But this example perfectly
illustrates the \emph{biological} nature of the axiom. For, in this example,
over millions of years, inevitably the two groups would evolutionarily diverge
from each other. They could not \emph{both} remain human forever. The same
would hold if we built an impassable wall around the equator, or sealed off
an island in the Pacific, or tried to violate the axiom in some similar
way (nevermind the great difficulty\footnote{To quote Socrates again:
``It is neither possible nor
beneficial for one tribe to remain alone, in isolation and unmixed.''
\cite{philebus}}
of strictly enforcing such isolation
for eternity).
\red{In this way, we have an interesting interaction between biology and
mathematics:
attempts to invalidate the identical ancestor point axiom by contriving
mathematical counterexamples are
thwarted by evolution inevitably breaking the species in question apart.}
\qed

\textit{Third informal argument.}
We appeal to Hennig \cite{hennig1999phylogenetic}.
Assume a species $S$ fails to satisfy the identical ancestor point axiom.
This means that there is some
organism $v$ in $S$ such that, no matter how far into the future we look, there are
always descendants of $v$ in $S$ and also non-descendants of $v$ in $S$.
If $S_1$ is the set of descendants of $v$ in $S$ and $S_2$ is the set of
non-descendants of $v$ in $S$, then $S_1$ and $S_2$ therefore define at least a
one-way permanent split in $S$, if not a two-way permanent split (it is possible
for members of $S_2$ to have children in $S_1$, but not vice versa).
See Figure \ref{splitfigure} for an illustration of what we are calling
``one-way'' and ``two-way'' permanent splits.
According to Hennig, \red{splits} induce speciation
\red{(later, Kornet \cite{kornet1993permanent}
pointed out that these splits should be permanent)}.
But it is absurd for speciation to occur in $S$, since $S$ itself
is a species.

\redder{To} be sure, it is a little unclear whether Hennig would include
``one-way'' permanent splits as \red{splits}; the example graphs he depicts
all involve two-way splits\footnote{\red{Hennig
does not give a rigorous mathematical
definition of what it means for a network to have or not have a
split. One could
propose the identical ancestor point axiom as a rigorous
definition of what it means for $S$ to not have a permanent split.}}.
The closest we can find is on p.\ 67, where he says:
``In principle, verification [of a species hypothesis] is possible only by proving
that unimpaired reproductive relations are actually possible between all
individuals with the specified characters.'' It seems this is not the case
for the $S,S_1,S_2$ above, since no member of $S_1$ can have a child in $S_2$.

\begin{figure}
    \begin{center}
        \includegraphics[scale=0.5]{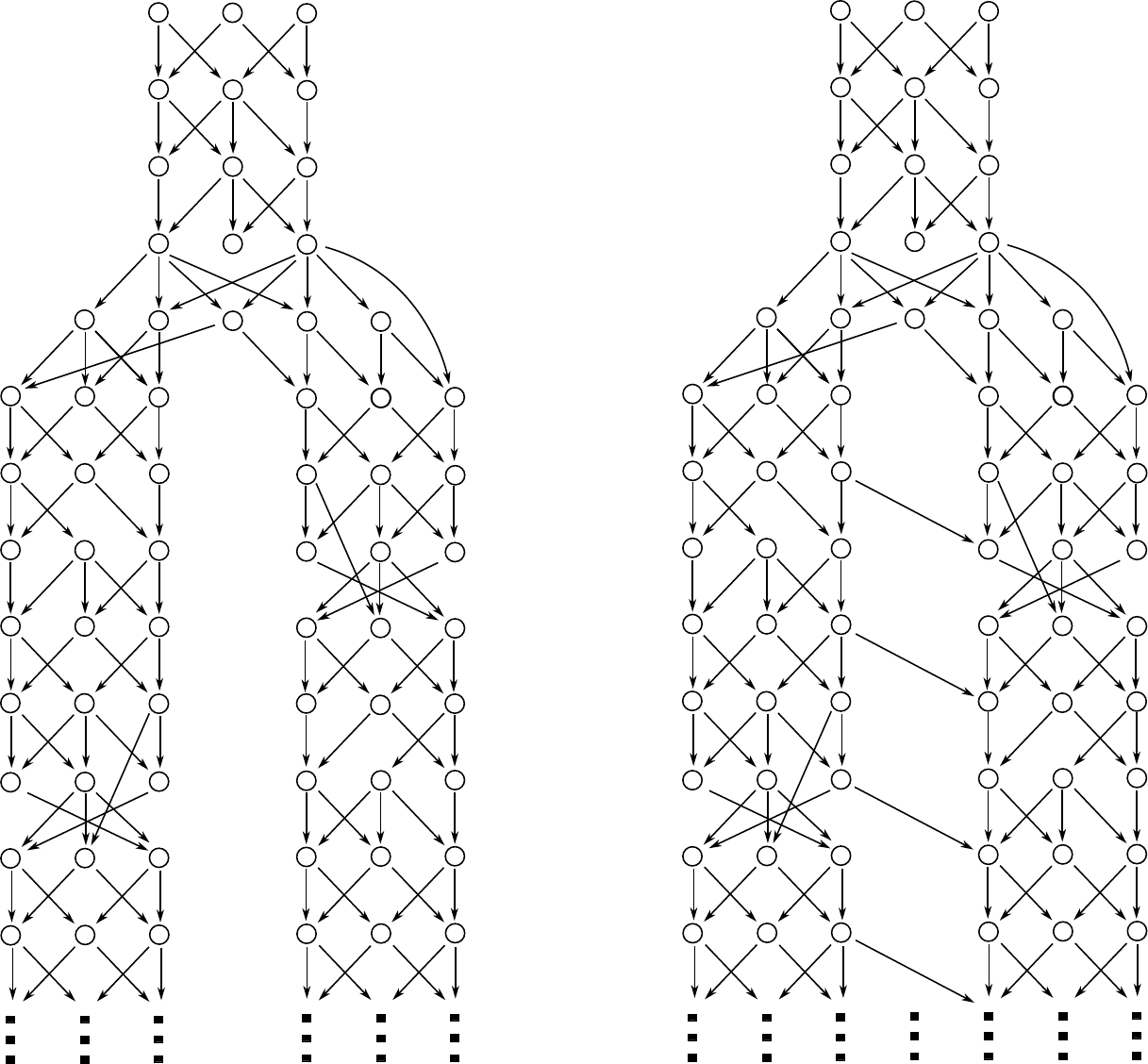}
    \end{center}
    \caption{Different types of permanent splits. Left: A two-way permanent
    split. Right: A one-way permanent split.}
    \label{splitfigure}
\end{figure}

The goatsbeard plant \emph{Tragopogon miscellus},
a hybrid species of other \emph{Tragopogon} species, may be an example of such a
one-way permanent split. Campbell et al write: ``Although the \emph{T.\ miscellus}
population grows mainly by reproduction of its own members,
additional episodes of hybridization between the parent species continue to
add new members to the \emph{T.\ miscellus} population'' \cite{reece2014campbell}.
While of course Campbell et al are not using the Hennigian species concept,
nevertheless the widespread recognition of \emph{T.\ miscellus} as a distinct
species from its parents is evidence that one-way permanent splits
induce speciation.
\qed

\textit{Fourth informal argument.}
We present a simple probability argument.
Let $v$ be some organism in a species, and consider the species some
time after $v$. Suppose that at this time, $X\%$ of the species
consists of descendants of $v$ and $(100-X)\%$ of the species consists
of non-descendants of $v$. If two descendants of $v$ have a child, the child
will be a descendant of $v$; if two non-descendants of $v$ have a child,
the child will be a non-descendant of $v$. So far, the ``competition''
between the $v$-descendant and $v$-non-descendant ``teams'' is fair.
But, if a descendant of $v$ and a non-descendant of $v$ have a child together,
the child will be a descendant of $v$. This asymmetry
gives the $v$-descendants ``team''
an unfair advantage. As a result, we should expect that on average, the
proportion of $v$-descendants will increase until it reaches $100\%$, unless
this team goes extinct before then. Thus, we expect that eventually all
members of the species will be descendants of $v$, or eventually no members
of the species will be descendants of $v$.
\qed

\textit{Fifth informal argument.}
We present an informal combinatorial argument.
Let $v$ be an organism in a species in generation $n$, and consider an
organism $w$ in the same species in generation $n+m$. If $m=1$, then there
are two ``paths'' back in time from $w$ to generation $n$: $w$ to $w$'s mother,
and $w$ to $w$'s father. If $m=2$, then there are four such paths:
\begin{enumerate}
    \item
    $w$ $\rightarrow$ $w$'s mother $\rightarrow$ $w$'s mother's mother.
    \item
    $w$ $\rightarrow$ $w$'s mother $\rightarrow$ $w$'s mother's father.
    \item
    $w$ $\rightarrow$ $w$'s father $\rightarrow$ $w$'s father's mother.
    \item
    $w$ $\rightarrow$ $w$'s father $\rightarrow$ $w$'s father's father.
\end{enumerate}
In general, there are $2^m$ such paths. Assume $m$ is very large, so
$2^m$ is astronomically larger than the size of the
whole generation $n$. Now, if a dartboard has $X$ regions, and we throw
$Y$ darts at it where $Y$ is astronomically larger than $X$, we should expect
every single region to get hit, except possibly some regions which are
unhittable. The regions of the dartboard are a metaphor for the members of
generation $n$; the dart-throws are a metaphor for the $2^m$ paths;
and unhittable regions are a metaphor for members of generation $n$ who have
no living descendants in generation $n+m$. So either $v$ is unhittable,
in which case it has no descendants by generation $n+m$, or else $v$ is hittable,
in which case one of the $2^m$ paths hits it, so $w$ is $v$'s descendant. By
arbitariness of $w$, if $v$ is hittable, $v$ is an ancestor of all members
of generation $n+m$.
\qed

Some of the above arguments assume sexual reproduction.
But they still work in modified form if we merely assume occasional
sexual reproduction. Darwin himself asserted \cite{darwin2004origin}
that occasional sexual reproduction is unavoidable. In fact, he articulated
this in a precise and combinatorial way, in what was later known as
the \emph{Knight-Darwin Law} \cite{darwin1898knight}.
See \cite{alexander2013infinite} for a translation of the Knight-Darwin
Law into modern graph-theoretical language; see \cite{alexander2015alternative}
for precise theorems about how the Knight-Darwin Law sharply constrains
the shape of the asexual parts of the biosphere.
Of course, there is now evidence for some species being genuinely purely
asexual \cite{judson1996ancient}. But even this has been challenged recently.
The most extreme cases are certain species of bdelloid rotifers.
One of those, \emph{Adineta vaga}, might not be asexual after all,
according to a recent paper \cite{vakhrusheva2020genomic}, although the
data is still insufficient to distinguish between meiosis and horizontal
gene transfer. And if it is just horizontal gene transfer, we could still
follow Dress et al \cite{dress2010species}, who consider
horizontal gene transfer to qualify as
a parent-child relationship in their version of $G$. 

Finally, we emphasize that in the identical ancestor point axiom, we are
talking about genealogical ancestorhood, not genetic ancestorhood.
It is possible for an organism to be an ancestor of almost every member
of its species and yet be a genetic ancestor of almost no members of its
species. In fact, Gravel and Steel have argued that this happens with high
probability \cite{gravel2015existence}.
\red{In Section \ref{objectionssectn} we address several anticipated objections
against the identical ancestor point axiom; readers can safely read
ahead to that section if they so desire.}

\subsection{The convexity axiom}

\begin{definition}
\label{convexitydefn}
    A set $S\subseteq G$ of organisms satisfies the \emph{convexity axiom}
    if the following requirement holds. For every $v\in G$,
    if $v$ has an ancestor in $S$ and also $v$ has a descendant in $S$,
    then $v\in S$.
\end{definition}

Convexity seems biologically plausible
\red{for species}. If $v_1$ is an ancestor of $v_2$ and
$v_1$ is in a species $S$, this might not necessarily imply $v_2\in S$,
because maybe $v_2$ (or an intermediate organism between $v_1$ and $v_2$)
mutated into a new species. But if additionally $v_2$ is an ancestor of $v_3\in S$,
then it seems unlikely that the line from $v_1$ to $v_3$ could have mutated into
a new species by the time it reached $v_2$, and then mutated back to the original
species again by the time it reached $v_3$. Any such mutation apparently would
be temporary and thus not significant enough to induce speciation.
Indeed, if birds could give birth to dinosaurs, we would have to revise our basic
ideas about bird and dinosaur species.

The convexity axiom is similar to Dollo's law of irreversibility.
Dollo wrote in a letter: ``Nautilus does not yet have fins; Octopus has
them no longer. From this point of view the series ends where it begins.
But there is nothing contrary to irreversibility in this. After all,
Octopus has not turned into a nautiloid''
\cite{gould1970dollo}. The axiom is a compromise between two naive
errors: (i) that
every organism is the same species as its parent; (ii) that every organism is
the same species as its child. Either of these two stronger assertions has the
unintended consequence of implying that all organisms are the same species
(assuming all are descended from a single original organism). The convexity
axiom avoids this unintended consequence while salvaging much of the combinatorial
fruit of the naive versions.

\subsection{\red{Specieslike clusters}}

\begin{definition}
\label{specieslikeclusterdefn}
    \red{A set $S\subseteq G$ is a \emph{specieslike cluster} if
    the following conditions hold:}
    \begin{enumerate}
        \item \red{$S$ is connected.}
        \item \red{$S$ satisfies the identical ancestor point axiom.}
        \item \red{$S$ satisfies the convexity axiom.}
    \end{enumerate}
\end{definition}

\red{In the next section, we will show that specieslike clusters reduce the
subjectivity of species definition in a certain technical sense.
But first, we want to make a few remarks about small and large specieslike
clusters.}

\begin{remark}
\label{smallclustersrmk}
    \red{(Small specieslike clusters)
    The reader will notice there are certain degenerate examples of
    specieslike clusters. For any organism $v$, it is easy to check that
    the singleton $\{v\}$ is a specieslike cluster. This is usually
    undesirable, but it \emph{could} be desirable, if in fact $v$ is
    isolated from the rest of $G$ (picture: a novel organism arising
    from spontaneous biogenesis and then immediately dying).}
\end{remark}

\begin{remark}
\label{largeclustersrmk}
    \red{(Large specieslike clusters)
    In order to solve the problem of small specieslike clusters, the
    most natural thing to do would be to consider specieslike clusters
    which are maximal (that is, sets which are specieslike clusters,
    but which have no specieslike cluster proper supersets).
    Unfortunately, depending on $G$, there may be some vertices which
    do not inhabit any maximal specieslike cluster
    (\redder{for example, the leftmost vertex in
    Figure \ref{bigclusercounterexamplefig} does not belong to any
    maximal specieslike cluster\footnote{\redder{To see this, note
    that if such a specieslike cluster contained only finitely many vertices from
    the top branch in the figure, then it could be enlarged by
    adding one more vertex of that branch. Thus, a maximal such cluster
    would have to contain infinitely many vertices from the top branch, and so
    by convexity would contain the entire top branch. By identical reasoning,
    it would contain the entire bottom branch as well.
    But then it would clearly not satisfy the identical ancestor point axiom.}}}).
    This would make
    maximal specieslike clusters undesirable as a species notion if
    we assume that every organism should inhabit some species.}
\end{remark}

\begin{figure}
    \begin{center}
        \includegraphics[scale=0.5]{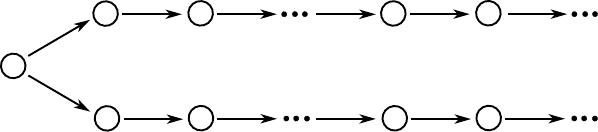}
    \end{center}
    \caption{\redder{An example of an infinitary biosphere in which a vertex fails
    to be contained in any maximal specieslike cluster.}}
    \label{bigclusercounterexamplefig}
\end{figure}

\red{Remark \ref{largeclustersrmk} naturally leads to the following question.}

\begin{informalquestion}
\label{mainqstn}
    \red{(To be answered in Section \ref{inclusivespeciessectn})
    Are there some additional constraints (plausible with species in
    mind) that can be imposed on
    specieslike clusters in order to ensure that every organism
    necessarily inhabits some maximal specieslike cluster satisfying
    said additional constraints? Of course this question is informal since
    ``plausible with species in mind'' is informal.
    In Section \ref{genericitysection} we will introduce two additional
    constraints, which we call the \emph{common ancestor property} and the
    \emph{reflection property}, and in Section \ref{inclusivespeciessectn}
    we will prove that every organism inhabits a maximal specieslike
    cluster satisfying these two additional properties.}
\end{informalquestion}

\section{Objective Species}
\label{objectivespeciessection}

The question of how to classify organisms into species seems inherently
subjective. One could contrive edge-cases where the same organism might get
classified into very different species depending on certain arbitrary
choices in the definition of species. In this section, we will show that
if we constrain species to \red{be specieslike clusters (as defined in
Definition \ref{specieslikeclusterdefn})}, then in a technical
sense, this subjectivity is reduced.

\begin{definition}
    Suppose $S\subseteq G$\redder{.}
    \red{An organism $v\in S$ is a \emph{generator} of $S$ if $S$
    contains at most finitely many non-descendants of $v$.}
    We will write $\mathrm{gen}(S)$ for the set of generators of $S$.
\end{definition}

\begin{lemma}
\label{generatorisequivalenttonotmule}
    If $S\subseteq G$ satisfies the identical ancestor point axiom,
    then for every $v\in S$,
    \red{if $v$ has infinitely many descendants in $S$,
    then $v$ is a generator of $S$.}
    % either $v$ is a generator of $S$,
    % or $v$ is a higher-order mule of $S$.
\end{lemma}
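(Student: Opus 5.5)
The plan is to unwind the definitions and use the dichotomy given by the identical ancestor point axiom directly. Fix $v\in S$ and assume $v$ has infinitely many descendants in $S$. By Definition \ref{udadefn}, at least one of two statements holds: either all but finitely many members of $S$ are descendants of $v$, or all but finitely many members of $S$ are non-descendants of $v$. The goal is to rule out the second option unless the first also holds, and then read the first option as exactly the definition of a generator.

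First, suppose the second statement holds: only finitely many members of $S$ are descendants of $v$. This directly contradicts the hypothesis that $S$ contains infinitely many descendants of $v$. So the second statement fails. By the axiom, the first statement must then hold: all but finitely many members of $S$ are descendants of $v$.

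Equivalently, $S$ contains at most finitely many non-descendants of $v$. Since $v\in S$, this is precisely the definition of $v$ being a generator of $S$, so $v\in\mathrm{gen}(S)$.

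There is no real obstacle here. The only point needing care is the phrasing: ``all but finitely many members of $S$ are descendants of $v$'' means the same as ``$S$ contains at most finitely many non-descendants of $v$.'' Both describe the set $S$ minus the descendants of $v$ being finite; the organism $v$ itself is a non-descendant of $v$ and lies in that finite set. The infinitude of $G$ and the other biosphere assumptions are not needed for this argument.
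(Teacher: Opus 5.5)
Your proof is correct and is exactly the definition-unwinding the paper intends; the paper's own proof consists of the single word ``Straightforward.'' The hypothesis rules out the second disjunct of the identical ancestor point axiom, and the first disjunct is literally the definition of a generator.
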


\begin{proof}
    Straightforward.
\end{proof}

\begin{definition}
\label{simequivalencerelndefn}
    Suppose $S_1,S_2\subseteq G$ are infinite.
    We say $S_1\sim S_2$ if the symmetric difference
    $\mathrm{gen}(S_1)\triangle \mathrm{gen}(S_2)$ is finite.
\end{definition}

\red{The statement that the generator-sets of
$S_1$ and $S_2$ have finite
symmetric difference can be informally glossed as
``$S_1$ and $S_2$ have almost the exact same generator-set''.
This makes it intuitively clear that $\sim$ is an equivalence
relation (we leave the formal details to the reader).}

\red{It can be shown that $S_1\sim S_2$ iff there exists some $r\in\mathbb R$
such that $\mathrm{gen}(S_1)$
and $\mathrm{gen}(S_2)$ contain exactly the same organisms born after
time $r$.}

\begin{theorem}
\label{objspeciesthm}
    (Objective Species Theorem)
    Suppose $S_1,S_2\subseteq G$ are \red{infinite specieslike clusters}.
    If $\mathrm{gen}(S_1)\cap\mathrm{gen}(S_2)$ is infinite
    then $S_1\sim S_2$.
\end{theorem}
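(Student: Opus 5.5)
The plan is to prove directly that $\mathrm{gen}(S_1)\setminus\mathrm{gen}(S_2)$ is finite. The symmetric argument, with the roles of $S_1$ and $S_2$ exchanged, then shows that $\mathrm{gen}(S_2)\setminus\mathrm{gen}(S_1)$ is finite, and together these give $S_1\sim S_2$. The tools are the convexity axiom and the definition of generator, with the infinitude of $\mathrm{gen}(S_1)\cap\mathrm{gen}(S_2)$ used twice.

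First, fix once and for all a common generator $w\in\mathrm{gen}(S_1)\cap\mathrm{gen}(S_2)$. Since $w$ is a generator of $S_1$, the set $F$ of non-descendants of $w$ in $S_1$ is finite. (Whether $w\in F$ depends on convention, since $w$ is not its own descendant; this does not matter.) I will show that every $v\in\mathrm{gen}(S_1)\setminus F$ with $v\neq w$ lies in $\mathrm{gen}(S_2)$. Since $w\in\mathrm{gen}(S_2)$ anyway, this gives $\mathrm{gen}(S_1)\setminus\mathrm{gen}(S_2)\subseteq F$, which is finite.

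So let $v$ be such a vertex. Because $v\notin F$ and $v \neq w$, $v$ is a descendant of $w$, and $w\in S_2$. Because $v$ is a generator of $S_1$, only finitely many members of $S_1$ are non-descendants of $v$. Since $\mathrm{gen}(S_1)\cap\mathrm{gen}(S_2)\subseteq S_1$ is infinite, we can pick some $u\in\mathrm{gen}(S_1)\cap\mathrm{gen}(S_2)$ that is a descendant of $v$. Now $v$ has an ancestor $w\in S_2$ and a descendant $u\in S_2$, so the convexity axiom for $S_2$ gives $v\in S_2$.

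Finally, $u$ is a generator of $S_2$, so all but finitely many members of $S_2$ are descendants of $u$. Each of these is then a descendant of $v$, by transitivity of the ancestor relation. The non-descendants of $v$ in $S_2$ are therefore contained in the finite set of non-descendants of $u$ in $S_2$, together with possibly $u$ itself. Hence $v\in\mathrm{gen}(S_2)$. Alternatively, since $S_2$ is infinite, $v$ has infinitely many descendants in $S_2$, and Lemma \ref{generatorisequivalenttonotmule} applies.

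I do not expect a serious obstacle. The only delicate step is choosing $u$ as a descendant of $v$, which requires that infinitely many common generators exist and not merely one. The other point to get right is to fix $w$ before quantifying over $v$, so that the exceptional set $F$ does not depend on $v$. Connectedness and the identical ancestor point axiom seem not to be needed beyond what is already built into the notion of generator.
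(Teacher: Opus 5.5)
Your proof is correct and has the same skeleton as the paper's. Both fix a common generator, take a member $v$ of $\mathrm{gen}(S_1)$ descending from it, find a common generator $u$ below $v$, and use convexity of $S_2$ to place $v$ in $S_2$. The paper argues by contradiction from an enumeration of an assumed-infinite $\mathrm{gen}(S_1)\setminus\mathrm{gen}(S_2)$. You argue directly and get the explicit bound $\mathrm{gen}(S_1)\setminus\mathrm{gen}(S_2)\subseteq F$; that difference is cosmetic.

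The substantive difference is the final step. The paper concludes $v\in\mathrm{gen}(S_2)$ by noting that $v$ has infinitely many descendants in $S_2$ and applying Lemma~\ref{generatorisequivalenttonotmule}, which relies on the identical ancestor point axiom for $S_2$. Your main argument instead observes that every non-descendant of $v$ in $S_2$ is a non-descendant of $u$. So generator status passes from $u$ to its ancestor $v\in S_2$ without any appeal to that axiom. (Your caveat ``together with possibly $u$ itself'' is unnecessary, since $u$ is a descendant of $v$.)

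Consequently your argument proves a stronger statement, as you suspected. Any two sets satisfying the convexity axiom with infinitely many common generators are $\sim$-equivalent; neither the identical ancestor point axiom nor connectedness is needed. The paper's route mainly buys interpretation, since it ties the theorem to its lemma that non-generators are `sterile.' Logically, your version is strictly more general.
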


\begin{proof}
    Assume the hypotheses. For sake of contradiction, assume
    $S_1\not\sim S_2$, i.e., that
    $\mathrm{gen}(S_1)\triangle \mathrm{gen}(S_2)$ is infinite.
    Without loss of generality, we may assume
    $\mathrm{gen}(S_1)\backslash \mathrm{gen}(S_2)$ is infinite,
    let its members be $v_1,v_2,\ldots$.
    Let $v\in \mathrm{gen}(S_1)\cap \mathrm{gen}(S_2)$.
    Since $v$ is a generator of $S_1$ and $\{v_1,v_2,\ldots\}$
    is an infinite set of members of $S_1$,
    there is some $i$ such that $v$ is an ancestor of
    $v_i$. Since $v_i$ is a generator of $S_1$ and
    $\mathrm{gen}(S_1)\cap \mathrm{gen}(S_2)$ is an infinite
    subset of $S_1$, $v_i$ has some descendant
    $w\in \mathrm{gen}(S_1)\cap\mathrm{gen}(S_2)$.
    In particular, $w\in S_2$.
    Since $v_i$ has an ancestor in $S_2$ (namely $v$)
    and a descendant in $S_2$ (namely $w$), and since
    $S_2$ satisfies the convexity axiom, we have $v_i\in S_2$.
    \redder{Since $v_i$ is an ancestor of $w$, and $w$ has infinitely many
        descendants in $S_2$ (since $w\in\mathrm{gen}(S_2)$), it follows that
        $v_i$ has infinitely many descendants in $S_2$.}
    % Since $v_i$ is a generator of $S_1$, it follows that $v_i$
    % is an ancestor of infinitely many members of
    % $\mathrm{gen}(S_1)\cap\mathrm{gen}(S_2)$,
    % all of whom are members of $S_2$.
    % Thus $v_i$ is not a higher-order mule of $S_2$.
    By Lemma \ref{generatorisequivalenttonotmule},
    $v_i\in \mathrm{gen}(S_2)$.
    Absurd: $v_1,v_2,\ldots$ are the members of
    $\mathrm{gen}(S_1)\backslash \mathrm{gen}(S_2)$.
\end{proof}

Thus, in a sense, the arbitrariness of species classification is
reduced as long as we require
species \red{be specieslike clusters (Definition \ref{specieslikeclusterdefn})}.
If we only care about species up to $\sim$-equivalence classes,
and if we consider \red{the identical ancestor point axiom\redder{,} the convexity
axiom}\redder{, and connectivity}
as reasonable assumptions for species,
then no two distinct species (modulo $\sim$)
can have significant \red{(i.e., infinite)} overlap, except possibly
\red{for non-generators.}
% (as in the case of
% horses and donkeys) for (possibly higher-order) mules.

\red{Non-generators play a role analogous to the joint offspring
of horses and donkeys, namely \emph{mules}. In some biological contexts,
it is convenient to assume hybrids do not
exist. In such contexts, the question inevitably arises: what
about mules? The standard answer is that mules can be ignored because they are
sterile. Lemma \ref{generatorisequivalenttonotmule} suggests that
non-generators are ``sterile'' in a weaker sense, and thus
``can be ignored'' in an analogous way. When we ignore them and focus
solely on generators, Theorem \ref{objspeciesthm} becomes a strong statement
about objective species: any two infinite specieslike clusters have
almost no overlap or almost total overlap.}

\section{Generic Properties of Organism-Sets}
\label{genericitysection}

\red{Toward the goal of answering Informal Question \ref{mainqstn},
w}e will develop a general framework for
guaranteeing maxima or minima of $\sim$-equivalence classes.
We will define notions of \emph{upward genericity} and \emph{downward genericity}
of sets of $G$-subsets.
Upward generic sets of $G$-subsets lend themselves to construction
of maximal $G$-subsets. Downward generic sets of $G$-subsets lend themselves to
construction of minimal $G$-subsets.
These notions resemble notions of genericity from \cite{alexander2020self}.

If $(X,<)$ is a linear order, then an \emph{ascending (resp.\ descending)
chain of $G$-subsets
indexed by $X$} is a family $\{S_\alpha\}_{\alpha\in X}$
(each $S_\alpha\subseteq G$)
such that for all $\alpha,\beta\in X$, if $\alpha<\beta$ then
$S_\alpha\subseteq S_\beta$ (resp.\ $S_\beta\subseteq S_\alpha$).
An \emph{ascending (resp.\ descending) chain} of $G$-subsets is an
ascending (resp.\ descending) chain of $G$-subsets indexed by some nonempty
linear order $(X,<)$.

\begin{definition}
    (Genericity)
    Suppose $T$ is a set of $G$-subsets.
    \begin{itemize}
        \item
        $T$ is \emph{upward generic} if the following property holds.
        For every ascending chain $\{C_\alpha\}_{\alpha\in X}$
        of nonempty $G$-subsets,
        if each $C_\alpha\in T$, then $\cup_\alpha C_\alpha\in T$.
        \item
        $T$ is \emph{downward generic} if the following property holds.
        For every descending chain $\{C_\alpha\}_{\alpha\in X}$
        of nonempty $G$-subsets,
        if each $C_\alpha\in T$, then $\cap_\alpha C_\alpha\in T$.
    \end{itemize}
\end{definition}

\begin{lemma}
\label{closureunderwedgelemma}
    (Closure under $\bigcap$)
    Suppose $\mathcal T$ is a set of sets of $G$-subsets.
    \begin{enumerate}
        \item
        If every $A\in\mathcal T$ is upward generic,
        then $\bigcap\mathcal T$ is upward generic.
        \item
        If every $A\in\mathcal T$ is downward generic,
        then $\bigcap\mathcal T$ is downward generic.
    \end{enumerate}
\end{lemma}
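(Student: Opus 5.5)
The argument unwinds the definitions directly; both parts are symmetric, so we give part 1 in detail and obtain part 2 by reversing inclusions and swapping unions for intersections.

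For part 1, suppose every $A\in\mathcal T$ is upward generic. Let $\{C_\alpha\}_{\alpha\in X}$ be an ascending chain of nonempty $G$-subsets with each $C_\alpha\in\bigcap\mathcal T$. We must show $\cup_\alpha C_\alpha\in\bigcap\mathcal T$, that is, $\cup_\alpha C_\alpha\in A$ for every $A\in\mathcal T$. Fix such an $A$. Since each $C_\alpha\in\bigcap\mathcal T\subseteq A$, the chain $\{C_\alpha\}$ is an ascending chain of nonempty $G$-subsets all lying in $A$. Upward genericity of $A$ then gives $\cup_\alpha C_\alpha\in A$. As $A$ was arbitrary, $\cup_\alpha C_\alpha\in\bigcap\mathcal T$.

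Part 2 is identical, using descending chains, intersections $\cap_\alpha C_\alpha$, and downward genericity of each $A$.

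The one point needing care is the degenerate case $\mathcal T=\emptyset$, where $\bigcap\mathcal T$ is not a well-defined set in ZFC. We adopt the natural convention that $\bigcap\emptyset$ is the set of all $G$-subsets, i.e.\ the power set of $G$. This set is trivially both upward and downward generic, since any union or intersection of $G$-subsets is again a $G$-subset. With this convention, the lemma holds in that case as well, and nothing else in the argument is an obstacle.
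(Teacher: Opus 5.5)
Your proof is correct and matches the paper's argument: fix an arbitrary $A\in\mathcal T$, observe that every $C_\alpha$ lies in $A$, apply the genericity of $A$, and obtain part 2 by the symmetric argument. Your convention for $\mathcal T=\emptyset$ (taking $\bigcap\emptyset$ to be the power set of $G$) is a harmless addition that the paper does not address, since the lemma is only ever applied to nonempty $\mathcal T$.
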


\begin{proof}
    (1)
    Let $\{C_\alpha\}_{\alpha\in X}$ be an ascending chain
    of nonempty
    $G$-subsets such that each $C_\alpha\in\bigcap\mathcal T$.
    Let $C=\cup_\alpha \mbox{\red{$C$}}_\alpha$, we must show
    $C\in\bigcap\mathcal T$.
    In other words, we must show that for every $A\in\mathcal T$,
    $C\in A$.
    Let $A\in\mathcal T$.
    Since each $C_\alpha\in\bigcap\mathcal T$,
    it follows that each $C_\alpha\in A$.
    Thus, since $A$ is upward generic, $C\in A$, as desired.

    (2) Similar to (1).
\end{proof}

\begin{theorem}
\label{abstractspeciesexistencetheorem}
    (Existence of Extrema)
    Suppose $T$ is a nonempty set of $G$-subsets.
    \begin{enumerate}
        \item
        If $T$ is upward generic then $T$ contains a maximal element
        (a $G$-subset which is not a proper subset of any element of $T$).
        Furthermore, for any $S\in T$,
        such a maximal element can be found which is a superset of $S$.
        \item
        If $T$ is downward generic
        then $T$ contains a minimal element
        (a $G$-subset which is not a proper superset of any member of $T$).
        Furthermore, for any $S\in T$, such a minimal element can be
        found which is a subset of $S$.
    \end{enumerate}
\end{theorem}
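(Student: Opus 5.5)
The plan is to apply Zorn's lemma to $T$ restricted to sets comparable with the given $S$, ordered by inclusion (for (1)) or reverse inclusion (for (2)). The one wrinkle is that upward and downward genericity only speak about chains of \emph{nonempty} $G$-subsets, so chains that contain $\emptyset$ must be handled separately.

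For (1), I fix $S\in T$ and let $T_S=\{A\in T : S\subseteq A\}$, partially ordered by $\subseteq$. This set is nonempty because it contains $S$. I check that every chain $\mathcal C$ in $T_S$ has an upper bound in $T_S$, in three cases.
\begin{itemize}
\item If $\mathcal C$ is empty, $S$ itself is an upper bound.
\item If every member of $\mathcal C$ is empty, then $\mathcal C=\{\emptyset\}$, which lies in $T_S$ and bounds the chain.
\item Otherwise, discard the empty set (if present) from $\mathcal C$. The remaining members are nonempty and still form an ascending chain indexed by itself under $\subseteq$, and it has the same union $C$ as $\mathcal C$. Upward genericity gives $C\in T$, and clearly $S\subseteq C$. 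So $C\in T_S$, and it bounds every member of $\mathcal C$.
\end{itemize}
Zorn's lemma then yields a maximal element $M$ of $T_S$. It remains to see that $M$ is maximal in all of $T$. If $M\subsetneq A$ with $A\in T$, then $S\subseteq M\subseteq A$, so $A\in T_S$, contradicting maximality of $M$ in $T_S$. The "furthermore" clause holds by construction. The unqualified existence claim follows by picking any $S\in T$, which is possible since $T$ is nonempty.

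For (2), I dualize. Let $T^S=\{A\in T : A\subseteq S\}$, ordered by $\supseteq$, and show every chain has a lower bound in $T^S$.
\begin{itemize}
\item The empty chain is bounded by $S$.
\item A chain containing $\emptyset$ is bounded by $\emptyset$, which is then a member of $T^S$.
\item A chain of nonempty sets has intersection in $T$ by downward genericity. This intersection is a subset of $S$, hence lies in $T^S$; it may well be empty, which is harmless.
\end{itemize}
Zorn's lemma gives an element of $T^S$ minimal under inclusion. Exactly as in (1), it is minimal in $T$: anything in $T$ below it is also below $S$, so it lies in $T^S$.

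I do not expect a genuine obstacle. The only delicate point is the bookkeeping around empty members and the empty chain. Genericity as defined does not cover these, so the case split above is needed, along with the observation that removing $\emptyset$ from an ascending chain does not change its union.
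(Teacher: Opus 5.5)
Your proposal is correct and follows the paper's route: the paper's entire proof is ``By Zorn's lemma.'' Your restriction to $T_S=\{A\in T : S\subseteq A\}$ (resp.\ $T^S$), together with the separate handling of the empty chain and of empty members, supplies exactly the bookkeeping that one-line proof leaves implicit, since genericity is only stated for chains of nonempty sets indexed by nonempty linear orders.
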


\begin{proof}
    By Zorn's lemma.
\end{proof}

\subsection{Positive genericity results}

\begin{definition}
\label{CAdefn}
    A set $S\subseteq G$ of organisms \red{has} the
    \emph{common ancestor \red{property}} if there exists some $v\in S$ (called
    the \emph{common ancestor} of $S$) such that for all $w\in S$,
    if $w\not=v$ then $v$ is an ancestor of $w$.
\end{definition}

\red{The common ancestor property has a cladistical
flavor. Combined with convexity, it says that $S$ is monophyletic or
paraphyletic. It is arguably plausible as a species property, at least if
we enforce the convention that species must have definite starting points
(the same way we declare a certain day to be the first day of Summer,
even though not particularly distinct from Spring).
It follows from Definition \ref{infinitebiospheredefn} (part 1) that
if a set has the common ancestor property, then its common ancestor
is unique.}

\begin{definition}
\label{reflectiondefn}
    A set $S\subseteq G$ of organisms \red{has} the \emph{reflection \red{property}}
    if the following statement is true. For every organism $v\in S$, if
    $v$ has infinitely many descendants, then $v$ has infinitely many
    descendants in $S$.
\end{definition}

The reflection \red{property}
is motivated by reflection properties from mathematical
logic \cite{carlson2001elementary}.
\red{Species would have the reflection property provided that in delineating
them, we enforced the following convention: whenever}
an organism
has many descendants in species $A$ and few (or no) descendants in species
$B$, then $A$ \red{is the} better choice for classifying
that organism itself, rather than $B$.

\begin{definition}
    We assign names to the following sets of $G$-subsets.
    \begin{itemize}
        \item $\mathrm{IAP}$ is the set of $G$-subsets which
            satisfy the identical ancestor point axiom
            (Definition \ref{udadefn}).
        \item $\mathrm{CONV}$ is the set of $G$-subsets which
            satisfy the convexity axiom
            (Definition \ref{convexitydefn}).
        \item $\mathrm{CA}$ is the set of $G$-subsets which
            \red{have} the common ancestor \red{property}
            (Definition \ref{CAdefn}).
        \item $\mathrm{REF}$ is the set of $G$-subsets which
            \red{have} the reflection \red{property}
            (Definition \ref{reflectiondefn}).
    \end{itemize}
\end{definition}

\begin{theorem}
\label{upwardgenericpositiveresultslemma}
    (Some upward genericity results)
    \begin{enumerate}
        \item
        $\mathrm{CONV}$ is upward generic.
        \item
        $\mathrm{CA}$ is upward generic.
        \item
        $\mathrm{REF}$ is upward generic.
        \item
        $\mathrm{IAP}\cap\mathrm{CONV}\cap\mathrm{CA}\cap\mathrm{REF}$
        is upward generic.
    \end{enumerate}
\end{theorem}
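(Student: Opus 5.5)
The plan is to prove parts (1)--(3) directly from the definitions, then obtain (4) by combining them with Lemma \ref{closureunderwedgelemma}, so that the only new work is showing that the union of the chain satisfies the identical ancestor point axiom. Throughout, fix an ascending chain $\{C_\alpha\}_{\alpha\in X}$ of nonempty $G$-subsets and write $C=\cup_\alpha C_\alpha$.

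For (1), suppose $v\in G$ has an ancestor $a\in C$ and a descendant $d\in C$. Since the chain is ascending, some single $C_\beta$ contains both $a$ and $d$, and convexity of $C_\beta$ gives $v\in C_\beta\subseteq C$. For (3), take $v\in C$ with infinitely many descendants in $G$. Then $v\in C_\alpha$ for some $\alpha$, so $v$ has infinitely many descendants in $C_\alpha\subseteq C$.

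For (2), let $c_\alpha$ be the common ancestor of $C_\alpha$. If $\alpha<\beta$, then $c_\alpha\in C_\beta$, so $c_\beta$ equals $c_\alpha$ or is an ancestor of it. Fix $\alpha_0$. For $\alpha\geq\alpha_0$, each $c_\alpha$ is $c_{\alpha_0}$ or one of its ancestors. By Definition \ref{infinitebiospheredefn}(2) there are only finitely many such candidates, and they are linearly ordered by ancestry. Hence there is a vertex $c$ and an $\alpha_1$ with $c_\alpha=c$ for all $\alpha\geq\alpha_1$. Any $w\in C\setminus\{c\}$ lies in some $C_\beta$ with $\beta\geq\alpha_1$, so $c$ is an ancestor of $w$, and $c$ is the common ancestor of $C$.

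For (4), parts (1)--(3) together with Lemma \ref{closureunderwedgelemma} show that $C\in\mathrm{CONV}\cap\mathrm{CA}\cap\mathrm{REF}$, with common ancestor $c$ as above. Suppose for contradiction that some $v\in C$ has infinitely many descendants and infinitely many non-descendants in $C$. I first collect what the hypotheses give. Pick $\alpha\geq\alpha_1$ with $v\in C_\alpha$. Since $v$ has infinitely many descendants in $G$, the reflection property gives infinitely many descendants of $v$ in every $C_\beta$ with $\beta\geq\alpha$. Lemma \ref{generatorisequivalenttonotmule} then says $v$ is a generator of every such $C_\beta$, so each $C_\beta$ contains only finitely many non-descendants of $v$; the difficulty is that this finite number may grow with $\beta$.

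Next I build an infinite line of non-descendants of $v$ inside $C$, by the reasoning of Lemma \ref{koniglemma}. Every non-descendant $w\in C$ of $v$ descends from $c$. Any vertex on a path from $c$ to $w$ lies in $C$ by convexity, and it is a non-descendant of $v$, since it is an ancestor of $w$. The set of such path vertices is infinite, closed under ancestors within $C$ down to $c$, and finitely branching. Hence there is an infinite directed path $c=u_1,u_2,\ldots$ in $C$ consisting of non-descendants of $v$. Each $u_j$ lies in some $C_\beta$ and has infinitely many descendants in $G$. By the reflection property and Lemma \ref{generatorisequivalenttonotmule}, $u_j$ is then also a generator of that $C_\beta$.

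The main obstacle is turning this line into a contradiction, because the $u_j$ for large $j$ may enter the chain only at later and later stages. My first attempt would be to show that some single $C_\beta$ contains infinitely many $u_j$. Then $C_\beta$ would have infinitely many non-descendants of $v$, contradicting that $v$ is a generator of $C_\beta$. To get this, I would fix $\beta$ with $u_2\in C_\beta$. Since $u_2$ and $v$ are both generators of $C_\beta$, all but finitely many members of $C_\beta$ are common descendants of $u_2$ and $v$. I would then try to use convexity of $C_\beta$ between $u_2$ and these common descendants to force the later $u_j$ into $C_\beta$. If that fails, I would look instead for a $C_\beta$ containing both $v$ and two non-descendants of $v$ that have infinitely many descendants but are not comparable, and derive a contradiction from the identical ancestor point axiom for $C_\beta$.
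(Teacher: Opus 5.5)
Parts (1)--(3) are correct. For (2), your argument that the common ancestors $c_\alpha$ move weakly backward through the finite set of ancestors of $c_{\alpha_0}$, and so are eventually constant, is a valid direct alternative to the paper's contradiction via an infinite backward sequence. In (4), your setup matches the paper's: a stage containing both $v$ and the common ancestor $c$, the K\"onig-style path $c=u_1,u_2,\ldots$ of non-descendants of $v$ in $C$, and the fact that each $u_j$ is a generator of every stage containing it.

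There is a genuine gap, however, because you stop at the decisive step and neither continuation closes it. Your first attempt invokes convexity of $C_\beta$ between $u_2$ and the common descendants of $u_2$ and $v$. That captures $u_j$ only if you already know $u_j$ is an ancestor of some member of $C_\beta$, and the generator status of $u_2$ and $v$ in $C_\beta$ says nothing about that. Your fallback cannot work at all. A stage containing $v$ and two incomparable non-descendants of $v$, each with infinitely many descendants, is perfectly consistent with the identical ancestor point axiom: all three can be generators sharing a common future. So that configuration yields no contradiction by itself.

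The missing idea is to use the generator property of $u_j$ in a \emph{later} stage to pull $u_j$ back into a \emph{fixed} stage.

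Fix $\alpha$ with $v,c\in C_\alpha$. Then $C_\alpha$ is infinite, since it contains infinitely many descendants of $v$. For $j\ge 2$, choose $\gamma\ge\alpha$ with $u_j\in C_\gamma$. Because $u_j$ is a generator of $C_\gamma$, only finitely many members of $C_\gamma\supseteq C_\alpha$ are non-descendants of $u_j$. Hence the infinite set $C_\alpha$ contains a descendant of $u_j$. It also contains $c$, an ancestor of $u_j$, so convexity of $C_\alpha$ gives $u_j\in C_\alpha$.

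Thus every $u_j$ lies in the single stage $C_\alpha$. That stage then contains infinitely many non-descendants of its generator $v$, which is the contradiction you were after. This also removes your worry that the finite exceptional sets grow with $\beta$. No uniform bound is needed: for each $j$ separately, the finiteness in one larger stage is transferred to the fixed infinite stage $C_\alpha$.
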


\begin{proof}
    For each case below, let $A$ be the set of $G$-subsets in question
    (for example, in the first case, $A=\mathrm{CONV}$).
    In each case, let $\mathscr C=\{C_\alpha\}_{\alpha\in X}$ be a nonempty
    ascending chain where each $C_\alpha\in A$,
    and let $C=\cup_\alpha C_\alpha$.

    \item
    ($\mathrm{CONV}$) We must show $C\in \mathrm{CONV}$.
    Suppose $v\in G$ has an ancestor $v'\in C$ and a descendant $v''\in C$.
    Since $\mathscr C$ is ascending, there is some $\alpha\in X$ such that
    $v',v''\in C_\alpha$. Since $C_\alpha\in \mathrm{CONV}$,
    $C_\alpha$ satisfies the convexity axiom.
    Thus $v\in C_\alpha$, thus $v\in C$.
    By arbitrariness of $v$, this shows $C$ satisfies the convexity axiom,
    i.e., $C\in\mathrm{CONV}$.

    \item
    ($\mathrm{CA}$) We must show $C\in\mathrm{CA}$.
    Assume not. Let $\alpha_0\in X$ be arbitrary.
    Since $C_{\alpha_0}\in\mathrm{CA}$,
    some $v_0\in C_{\alpha_0}$ is a common ancestor for $C_{\alpha_0}$.
    By assumption, $v_0$ is not a common ancestor for $C$, so there is
    some $\alpha_1\in X$ such that $v_0$ is not a common ancestor for
    $C_{\alpha_1}$. Clearly this implies $\alpha_1>\alpha_0$.
    Since $C_{\alpha_1}\in\mathrm{CA}$,
    some $v_1\in C_{\alpha_1}$ is a common ancestor for $C_{\alpha_1}$.
    Since $v_1\not=v_0$ and $C_{\alpha_0}\subseteq C_{\alpha_1}$,
    $v_1$ is an ancestor of $v_0$. This process can be continued
    forever, yielding a sequence $v_0,v_1,\ldots$ where each $v_{i+1}$
    is an ancestor of $v_i$. This contradicts the fact that only
    finitely many organisms were born before $v_0$
    (Definition \ref{infinitebiospheredefn}).

    \item
    ($\mathrm{REF}$) We must show $C\in\mathrm{REF}$.
    Suppose $v\in C$ has infinitely many descendants.
    Since $v\in C$, $v\in C_\alpha$ for some $\alpha$.
    Since $C_\alpha\in\mathrm{REF}$,
    $v$ has infinitely many descendants in $C_\alpha$.
    Since $C\supseteq C_\alpha$, we are done.

    \item
    ($\mathrm{IAP}\cap\mathrm{CONV}\cap\mathrm{CA}\cap\mathrm{REF}$)
    By (1--3), $C\in\mathrm{CONV}\cap\mathrm{CA}\cap\mathrm{REF}$.
    It remains to show $C\in\mathrm{IAP}$.
    Assume not. Then there is some $v\in C$ such that $C$ contains
    infinitely many descendants of $v$ and infinitely many non-descendants
    of $v$.
    Let $w_1$ be the common ancestor of $C$.

    Claim: There is an infinite path $Q=w_1,w_2,\ldots$ in $C$ of
    non-descendants of $v$ starting at $w_1$. The proof is like that
    of K\"onig's lemma (Lemma \ref{koniglemma}).
    Since $w_1$ has only finitely many children
    (Definition \ref{infinitebiospheredefn}) and $w_1$ has infinitely
    many descendants in $C$ which are non-descendants of $v$, some child
    $w_2$ of $w_1$ must have infinitely many descendants in $C$ which
    are non-descendants of $v$. Since $C\in\mathrm{CONV}$,
    $w_2\in C$; $w_2$ is a non-descendant of $v$ lest all of $w_2$'s
    descendants would be descendants of $v$. Similarly, since $w_2$ has
    only finitely many children and $w_2$ has infinitely many descendants
    in $C$ which are non-descendants of $v$, some child $w_3$ of $w_2$
    must have infinitely many descendants in $C$ which are non-descendants
    of $v$. This process continues forever, producing the desired
    path $w_1,w_2,\ldots$, proving the claim.

    \red{Since $C$ is the increasing union of the $C_\alpha$,
    there is some $\alpha\in X$ such that $v,w_1\in C_\alpha$.}
    % Now let $\alpha\in X$ be such that $v,w_1\in C_\alpha$.
    We claim that $w_n\in C_\alpha$ for all $n$\redder{. Fix any $n>1$.}
    Let $\beta\in X$ be such that $\redder{w_{n}}\in C_\beta$, we may assume
    $\beta\geq\alpha$.
    \redder{Since $w_1$ has infinitely many descendants (namely
    $w_2,w_3,\ldots$), the reflection property for $C_\alpha$ guarantees
    $w_1$ has infinitely many descendants in $C_\alpha$, thus
    $|C_\alpha|=\infty$. Since $w_n$ has infinitely many descendants
    (namely $w_{n+1},w_{n+2},\ldots$), the reflection property for $C_\beta$
    guarantees $w_n$ has infinitely many descendants in $C_\beta$.
    By the identical ancestor point axiom for $C_\beta$, it follows that
    $C_\beta$ cannot contain infinitely many non-descendants of $w_n$.
    But $C_\alpha\subseteq C_\beta$ since $\mathscr C$ is an ascending chain.
    Thus since $|C_\alpha|=\infty$, $C_\alpha$ must contain some descendant
    of $w_n$. Since $w_n$ has a descendant in $C_\alpha$, and also an ancestor
    (namely $w_1$) in $C_\alpha$, the convexity axiom for $C_\alpha$
    forces $w_n\in C_\alpha$, as claimed.}
    % ; we will prove this
    % by induction on $n$ (see Figure \ref{inductionfigure}).
    % The base case is given. Assume $w_n\in C_\alpha$.
    % Let $\beta\in X$ be such that $w_{n+1}\in C_\beta$, we may assume
    % $\beta\geq\alpha$.
    % Since $w_n$ has infinitely many descendants (namely $w_{n+1},w_{n+2},\ldots$),
    % the reflection \red{property} for $C_\alpha$ ensures $w_n$ has infinitely
    % many descendants in $C_\alpha$.
    % By a similar argument, $w_{n+1}$ has infinitely many descendants in
    % $C_\beta$.
    % \redder{Since $C_\beta$ satisfies the identical ancestor point axiom,
    % $C_\beta$ contains at most finitely many non-descendants of $w_{n+1}$.}
    % % By the identical ancestor point axiom for $C_\beta$,
    % % $w_{n+1}$ can have at most
    % % finitely many non-descendants in $C_\beta$.
    % Thus,
    % since $C_\alpha\subseteq C_\beta$, one of $w_n$'s infinitely many
    % descendants $d$ in $C_\alpha$ must be a descendant of $w_{n+1}$.
    % So $w_{n+1}$ has a descendant in $C_\alpha$ (namely $d$) and an ancestor
    % in $C_\alpha$ (namely $w_n$). By the convexity axiom for $C_\alpha$,
    % $w_{n+1}\in C_\alpha$. This completes the inductive argument.

    Since $v$ has infinitely many descendants and $v\in C_\alpha$,
    the reflection \red{property} for $C_\alpha$ ensures $v$ has infinitely
    many descendants in $C_\alpha$.
    But $v$ also has infinitely many non-descendants in $C_\alpha$,
    namely $w_1,w_2,\ldots$.
    This contradicts the identical ancestor point axiom for $C_\alpha$.
\end{proof}

% \begin{figure}
%     \begin{center}
%         \includegraphics[scale=0.5]{induction.pdf}
%     \end{center}
%     \caption{Proof of a claim in the proof of the
%     $\mathrm{IAP}\cap\mathrm{CONV}\cap\mathrm{CA}\cap\mathrm{REF}$
%     part of Theorem \ref{upwardgenericpositiveresultslemma}.}
%     \label{inductionfigure}
% \end{figure}

\begin{theorem}
\label{downwardgenericityresultslemma}
    (Some downward genericity results)
    \begin{enumerate}
        \item
        $\mathrm{CONV}$ is downward generic.
        \item
        $\mathrm{CONV}\cap\mathrm{REF}$ is downward generic.
        \item
        $\mathrm{IAP}$ is downward generic.
    \end{enumerate}
\end{theorem}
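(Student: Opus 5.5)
The three parts have very different difficulty. Parts (1) and (3) are direct "pointwise" checks. Part (2) needs a K\"onig-style compactness argument in the spirit of Lemma \ref{koniglemma}. Throughout, let $\{C_\alpha\}_{\alpha\in X}$ be a descending chain of nonempty $G$-subsets, each in the set in question, and let $C=\cap_\alpha C_\alpha$.

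For (1), I would mirror the $\mathrm{CONV}$ case of Theorem \ref{upwardgenericpositiveresultslemma}. Suppose $v\in G$ has an ancestor $v'\in C$ and a descendant $v''\in C$. Then $v',v''\in C_\alpha$ for every $\alpha$. Convexity of $C_\alpha$ gives $v\in C_\alpha$ for every $\alpha$, hence $v\in C$.

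For (3), suppose $C\notin\mathrm{IAP}$, witnessed by some $v\in C$ for which $C$ contains infinitely many descendants and infinitely many non-descendants of $v$. Since $C\subseteq C_\alpha$ for any fixed $\alpha$, the same $v$ witnesses that $C_\alpha\notin\mathrm{IAP}$, which is a contradiction.

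The main work is (2). By (1), $C\in\mathrm{CONV}$, so it remains to prove reflection. Let $v\in C$ have infinitely many descendants. I will build an infinite path $v=u_0,u_1,u_2,\ldots$ inside $C$, with each $u_i$ having infinitely many descendants. That suffices, because $u_1,u_2,\ldots$ are then infinitely many descendants of $v$ lying in $C$.

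For the inductive step, assume $u_i\in C$ has infinitely many descendants. For each $\alpha$, let $F_\alpha$ be the set of children $c$ of $u_i$ such that $c\in C_\alpha$ and $c$ has infinitely many descendants in $C_\alpha$.
\begin{itemize}
\item \emph{Each $F_\alpha$ is nonempty.} By reflection for $C_\alpha$, $u_i$ has infinitely many descendants in $C_\alpha$. Since $u_i$ has only finitely many children (Definition \ref{infinitebiospheredefn}), some child $c$ has infinitely many descendants in $C_\alpha$. Because $c$ has an ancestor ($u_i$) and a descendant in $C_\alpha$, convexity of $C_\alpha$ puts $c\in C_\alpha$.
\item \emph{The $F_\alpha$ are nested.} If $\alpha<\beta$ then $C_\beta\subseteq C_\alpha$, so $F_\beta\subseteq F_\alpha$.
\item \emph{Their intersection is nonempty.} The $F_\alpha$ form a chain of nonempty subsets of a fixed finite set, so a chain member of minimal size is contained in all the others. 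Hence $\cap_\alpha F_\alpha\neq\emptyset$.
\end{itemize}
Choose $u_{i+1}\in\cap_\alpha F_\alpha$. Then $u_{i+1}\in C$, and it has infinitely many descendants (already inside any single $C_\alpha$). This completes the induction.

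I expect this finite-chain intersection step, which picks one child that works uniformly across all $\alpha$, to be the only delicate point. It is exactly where the finiteness of children and convexity are both needed.
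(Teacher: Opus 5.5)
Your proposal is correct and follows essentially the paper's proof: parts (1) and (3) are the same pointwise checks. For part (2), the paper likewise builds a K\"onig-style path by repeatedly choosing a child that has infinitely many descendants in every $C_\alpha$ at once. The differences are cosmetic. The paper gets the uniform child by contradiction, taking $\beta=\max_w\alpha_w$ over the finitely many children, and it carries the invariant ``infinitely many descendants in every $C_\alpha$,'' so reflection is used only once and membership in $C$ follows from convexity at the end; you instead intersect a nested chain of nonempty subsets of a finite set and re-apply reflection at each step.
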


\begin{proof}
    For each case below, let $A$ be the set of $G$-subsets in question (for example,
    in the first case, $A=\mathrm{CONV}$).
    In each case, let $\mathscr C=\{C_\alpha\}_{\alpha\in X}$ be a nonempty
    descending chain where each $C_\alpha\in A$,
    and let $C=\cap_\alpha C_\alpha$.

    \item
    ($\mathrm{CONV}$)
    Suppose organism $v$ has an ancestor $v'\in C$ and a descendant
    $v''\in C$.
    For every $\alpha\in X$, $v'\in C_\alpha$ and $v''\in C_\alpha$,
    thus $v\in C_\alpha$ by the convexity axiom for $C_\alpha$.
    Thus $v\in C$.

    \item
    ($\mathrm{CONV}\cap\mathrm{REF}$)
    By (1), $C$ satisfies the convexity axiom.
    To show $C$ \red{has} the reflection \red{property}, suppose $v\in C$ has
    infinitely many descendants.
    \red{Since $v\in C$, $v$ is in $C_\alpha$ for every
    $\alpha\in X$.}
    For each \red{$\alpha\in X$,}
    the reflection \red{property} for $C_\alpha$
    ensures $v$ has infinitely many descendants in $C_\alpha$.
    \red{Now we will embark on a somewhat involved argument in the style of
    K\"onig's Lemma.}

    \red{\textbf{Claim:} $v$ has a child $v_1$ such that for each $\alpha\in X$,
    $v_1$ has infinitely many descendants in $C_\alpha$.}

    \redder{Assume the claim fails. Then for each child $w$ of $v$,
    there is some $\alpha_w$ such that $w$ has at most finitely many
    descendants in $C_{\alpha_w}$. Since $\mathscr C$ is a descending chain,
    this implies $w$ has at most finitely many descendants in $C_\beta$
    for every $\beta\geq\alpha_w$. Let $\beta=\max_{w\in W}\alpha_w$
    where $W$ is the set of all children of $v$ (the maximum is defined
    since $W$ is finite). By construction,
    each child of $v$ has at most finitely many descendants in $C_\beta$.
    Thus $v$ itself has at most finitely many descendants in $C_\beta$,
    contradicting that $v$ has infinitely many descendants in $C_\alpha$ for
    every $\alpha$. By contradiction, the claim is proved.}

    % \red{We prove the claim as follows. For each $\alpha\in X$,
    % since $v$ has infinitely many descendants in $C_\alpha$,
    % and $v$ has only finitely many children,
    % some child $v_{1\alpha}$ of $v$ must have infinitely many descendants
    % in $C_\alpha$. Now, since $\mathscr C$ is a descending chain,
    % for each $\beta\in X$, we have $C_\beta\subseteq C_\alpha$ for all
    % $\alpha\leq\beta$; thus $v_{1\beta}$ has infinitely many descendants in
    % $C_\alpha$ for all $\alpha\leq\beta$. If $X$ has a maximum element
    % $\gamma$, it follows that $v_1=v_{1\gamma}$ witnesses the claim.
    % But assume $X$ has no maximum. Since $v$ has only finitely many children,
    % some child $v_1^{cof}$ of $v$ must equal $v_{1\beta}$ for cofinally many
    % $\beta$ (by which we mean that for all $\alpha\in X$, there is
    % some $\beta>\alpha$ in $X$ such that $v_1^{cof}=v_{1\beta}$)---if none
    % of the children $w_1,\ldots,w_k$ of $v$ equals $v_{1\beta}$ for cofinally
    % many $\beta$ then for each $i=1,\ldots,k$ there must be some
    % $\alpha_i\in X$ such that there is no $\beta>\alpha_i$ with
    % $v_{1\beta}=w_i$, but then since $X$ has no maximum,
    % there would be some $\gamma\in X$ with $\gamma>\alpha_i$ for
    % all $i=1,\ldots,k$, and then $v_{1\gamma}$ could not be any child
    % $w_1,\ldots,w_k$ of $v$, a contradiction.
    % So $v_1=v_1^{cof}$ witnesses the claim.}

    \red{By identical reasoning,
    $v_1$ has a child $v_2$ such that for each $\alpha\in X$,
    $v_2$ has infinitely many descendants in $C_\alpha$.
    And so on forever, defining a path $v,v_1,v_2,\ldots$
    such that for each $i$, for each $\alpha\in X$,
    $v_i$ has infinitely many descendants in $C_\alpha$.}
    For each $\alpha$, the convexity axiom for $C_\alpha$ forces
    each $v_i\in C_\alpha$. So each $v_i\in C$, so $v$ has infinitely
    many descendants in $C$.

    \item
    ($\mathrm{IAP}$)
    Suppose $C$ fails to satisfy the identical ancestor point axiom.
    Then there is some
    $v\in C$ such that $C$ contains infinitely many descendants of
    $v$ and infinitely many non-descendants of $v$.
    Fix some $\alpha\in X$. Since $C_\alpha\supseteq C$,
    $v\in C_\alpha$ and
    $C_\alpha$ contains infinitely many descendants of $v$ and
    infinitely many non-descendants of $v$.
    This violates the identical ancestor point axiom for $C_\alpha$.
\end{proof}

\subsection{Negative genericity results}

In this subsection we will show that in one sense
Theorems \ref{upwardgenericpositiveresultslemma}
and \ref{downwardgenericityresultslemma} are as strong as possible.

\begin{proposition}
\label{firstnegativeprop}
    (Compare Theorem \ref{upwardgenericpositiveresultslemma})
    If $X$ is a proper subset of
    \[\{\mathrm{CONV},\mathrm{CA},\mathrm{REF},\mathrm{IAP}\},\]
    and $X$ contains $\mathrm{IAP}$, then there is some infinite
    biosphere $G$ such that $\bigcap X$ is not
    upward generic.
\end{proposition}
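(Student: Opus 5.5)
The plan is to give, for each of the three maximal proper subsets $X$ of $\{\mathrm{CONV},\mathrm{CA},\mathrm{REF},\mathrm{IAP}\}$ containing $\mathrm{IAP}$, an infinite biosphere $G$ and an ascending chain $C_1\subseteq C_2\subseteq\cdots$ of nonempty sets in $\bigcap X$ whose union $C$ fails $\mathrm{IAP}$. This handles every $X$ in the statement. Any proper subset $Y$ containing $\mathrm{IAP}$ is contained in one of these three maximal ones, say $X$. A chain in $\bigcap X$ is then also a chain in $\bigcap Y$. Its union is not in $\mathrm{IAP}$, hence not in $\bigcap Y$, so $\bigcap Y$ is not upward generic. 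In each example birthdates are chosen so that parents precede children, only finitely many vertices lie below any real, and every vertex has finitely many children; I will check this briefly each time.

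\emph{Missing $\mathrm{REF}$} ($X=\{\mathrm{CONV},\mathrm{CA},\mathrm{IAP}\}$). Let $G$ be a root $r$ with two infinite paths $r\to a_1\to a_2\to\cdots$ and $r\to b_1\to b_2\to\cdots$, where $t(r)=0$ and $t(a_i)=t(b_i)=i$. Let $C_n=\{r\}\cup\{a_i\}_i\cup\{b_1,\ldots,b_n\}$.
\begin{itemize}
\item Convexity and the common ancestor property (with common ancestor $r$) are immediate.
\item For $\mathrm{IAP}$: $r$ and each $a_i$ have cofinitely many descendants in $C_n$, and each $b_i$ has only finitely many.
\item The union is all of $G$, where $a_1$ has infinitely many descendants and infinitely many non-descendants (the $b_i$), so $\mathrm{IAP}$ fails.
\end{itemize}

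\emph{Missing $\mathrm{CONV}$} ($X=\{\mathrm{CA},\mathrm{REF},\mathrm{IAP}\}$). Take the same graph and add cross edges $b_i\to a_{i+1}$; children stay finite and birthdates stay increasing. Use the same $C_n$.
\begin{itemize}
\item $r$ is still the common ancestor.
\item For reflection, each $b_i\in C_n$ has the infinitely many descendants $a_{i+1},a_{i+2},\ldots$ in $C_n$, and the other vertices clearly satisfy it.
\item For $\mathrm{IAP}$, every $a_i$ and $b_i$ has cofinitely many descendants in $C_n$, because only finitely many $b$'s and $a$'s lie outside its descendant set.
\item Convexity fails, e.g.\ $b_{n+1}\notin C_n$ lies between $b_n$ and $a_{n+2}$. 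This failure is expected, since $\mathrm{IAP}\cap\mathrm{CONV}\cap\mathrm{CA}\cap\mathrm{REF}$ is upward generic.
\item The union is $G$, in which $a_1$ again has infinitely many descendants and infinitely many non-descendants.
\end{itemize}

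\emph{Missing $\mathrm{CA}$} ($X=\{\mathrm{CONV},\mathrm{REF},\mathrm{IAP}\}$). This is the case I expect to need the most care. The chain sets must have reflection without a root, so the convexity-violating trick above is unavailable. I would let $G$ be a path $a_1\to a_2\to\cdots$ together with parentless vertices $b_i$ and edges $b_i\to a_i$, with $t(a_i)=i$ and $t(b_i)=i-\tfrac12$. Let $C_n=\{a_i\}_i\cup\{b_1,\ldots,b_n\}$.
\begin{itemize}
\item Convexity holds since the $b$'s have no ancestors and the $a$'s form a full path.
\item Reflection holds because each $b_i$ and $a_i$ has infinitely many descendant $a$'s in $C_n$.
\item $\mathrm{IAP}$ holds since each vertex of $C_n$ has cofinitely many descendants there.
\item The union is $G$, in which $b_1$ has infinitely many descendants and infinitely many non-descendants $b_2,b_3,\ldots$.
\end{itemize}
The only things left to verify are these routine membership checks.
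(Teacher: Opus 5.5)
Your proposal is correct and is essentially the paper's proof: it uses the same three counterexamples with the same chains. These are two rays from a common root for missing $\mathrm{REF}$, a ray with parentless feeder vertices for missing $\mathrm{CA}$, and the two-ray graph with cross-links for missing $\mathrm{CONV}$, and your reduction to the three maximal proper subsets matches the paper's case split on which property is omitted. The only cosmetic difference is in the missing-$\mathrm{CONV}$ case, where you use direct edges $b_i\to a_{i+1}$ while the paper routes each cross-link through an extra vertex $u_i$ that never enters any chain set; both work.
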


\begin{proof}
    Case 1: $X$ fails to contain $\mathrm{REF}$.
    Let $G$ consist (see Figure \ref{negativefigure} part a)
    of two infinite directed paths $v_1,v_2,\ldots$ and
    $w_1,w_2,\ldots$ such that $v_1=w_1$ but such that
    $G\backslash\{v_1\}$ is disconnected.
    For each $\alpha\in\mathbb N$, let
    $S_\alpha=\{v_1,v_2,\ldots\}\cup\{w_1,\ldots,w_\alpha\}$.
    It is easy to show each $S_\alpha\in
    \mathrm{CONV}\cap\mathrm{CA}\cap\mathrm{IAP}$,
    but $S=\cup_\alpha S_\alpha\not\in\mathrm{IAP}$.

    Case 2: $X$ fails to contain $\mathrm{CA}$.
    Let $G$ consist (see Figure \ref{negativefigure} part b)
    of an infinite directed path $v_1,v_2,\ldots$,
    along with additional distinct vertices $w_1,w_2,\ldots$
    such that each $w_i$ is parentless and has $v_i$ as lone child.
    For each $\alpha\in \mathbb N$, let
    $S_\alpha=\{v_1,v_2,\ldots\}\cup\{w_1,\ldots,w_\alpha\}$.
    It is easy to show each
    $S_\alpha\in\mathrm{CONV}\cap\mathrm{REF}\cap\mathrm{IAP}$,
    but $S=\cup_\alpha S_\alpha\not\in\mathrm{IAP}$.

    Case 3: $X$ fails to contain $\mathrm{CONV}$.
    Let $G$ consist (see Figure \ref{negativefigure} part c)
    of two infinite directed paths $v_1,v_2,\ldots$ and
    $w_1,w_2,\ldots$ such that $v_1=w_1$, along with distinct additional
    vertices $u_2,u_3,\ldots$, such that:
    \begin{itemize}
        \item
        Each $u_i$ has lone parent $w_i$.
        \item
        Each $u_i$ has lone child $v_{i+1}$.
        \item
        $G\backslash (\{v_1\}\cup\{u_2,u_3,\ldots\})$ is disconnected.
    \end{itemize}
    For each $\alpha\in\mathbb N^+$, let
    \[
        S_\alpha
        =
        \{v_1,v_2,\ldots\}
        \cup
        \{w_1,\ldots,w_\alpha\}.
    \]
    It is easy to show that each
    $S_\alpha\in\mathrm{CA}\cap\mathrm{REF}\cap\mathrm{IAP}$,
    but $S=\cup_\alpha S_\alpha\not\in\mathrm{IAP}$.
\end{proof}

\begin{figure}
    \begin{center}
        \includegraphics[scale=1]{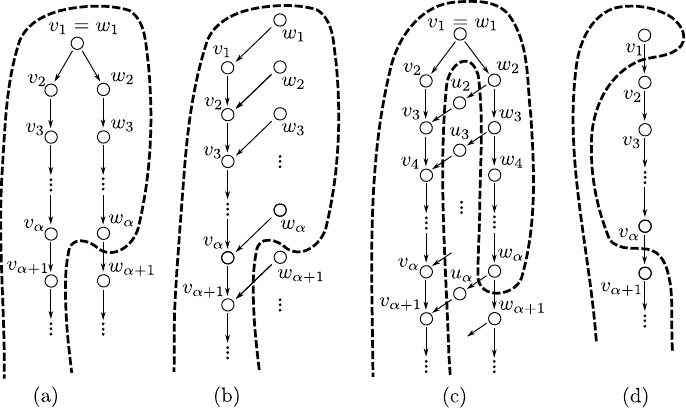}
    \end{center}
    \caption{Negative genericity examples. (a) The example in Case 1 of
    Proposition \ref{firstnegativeprop}. (b) The example in Case 2 of
    Proposition \ref{firstnegativeprop}. (c) The example in Case 3 of
    Proposition \ref{firstnegativeprop}. (d) The example in
    Proposition \ref{secondnegativeprop}.}
    \label{negativefigure}
\end{figure}

\begin{proposition}
\label{secondnegativeprop}
    (Compare Theorem \ref{downwardgenericityresultslemma})
    There exists an infinite biosphere $G$ such that
    $\mathrm{REF}$ is not downward generic.
\end{proposition}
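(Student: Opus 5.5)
We will exhibit an explicit infinite biosphere $G$ together with a descending chain $\{C_n\}_{n\in\mathbb N}$ of nonempty sets in $\mathrm{REF}$ whose intersection is not in $\mathrm{REF}$. The idea is to let the infinitely many descendants witnessing reflection in each $C_n$ "slide off to infinity," so that they all disappear in the intersection while some fixed vertex remains.

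The simplest candidate is a single infinite directed path $v_1,v_2,v_3,\ldots$ with $t(v_i)=i$. This is clearly an infinite biosphere: birthdates increase along edges, only finitely many vertices are born before any real $r$, and each vertex has at most one child. For each $n\in\mathbb N$ let
\[
C_n=\{v_1\}\cup\{v_i : i\geq n\}.
\]

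Each $C_n$ is in $\mathrm{REF}$. Every vertex of $G$ has infinitely many descendants, so the requirement is that each $v\in C_n$ has infinitely many descendants inside $C_n$. This holds because every member of $C_n$ is an ancestor of all but finitely many of the vertices $v_i$ with $i\geq n$. The sets are nonempty (each contains $v_1$) and they form a descending chain, since $C_{n+1}\subseteq C_n$. The intersection, however, is $\bigcap_n C_n=\{v_1\}$. Here $v_1$ has infinitely many descendants in $G$ but none in $\{v_1\}$, so the intersection is not in $\mathrm{REF}$.

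There is no real obstacle in this argument. The only point needing care is checking the definitions: the chain must consist of nonempty sets, which is why $v_1$ is included in every $C_n$. Note also that $C_n$ violates convexity, as it must, since Theorem \ref{downwardgenericityresultslemma} shows $\mathrm{CONV}\cap\mathrm{REF}$ is downward generic. A branching example in the style of Figure \ref{negativefigure} part (d) would work equally well, but the bare path already suffices.
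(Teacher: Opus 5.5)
Your proposal is correct and is essentially the paper's own proof: the paper takes the same single infinite directed path $v_1,v_2,\ldots$ with $S_\alpha=\{v_1\}\cup\{v_i : i>\alpha\}$, so the intersection is $\{v_1\}\notin\mathrm{REF}$. One small correction: Figure~\ref{negativefigure} part (d) is exactly this bare path, not a branching example.
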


\begin{proof}
    Let $G$ consist (see Figure \ref{negativefigure} part d)
    of an infinite directed path $v_1,v_2,\ldots$.
    For each $\alpha\in\mathbb N$, let
    $S_\alpha=\{v_1\}\cup\{v_i\,:\,i>\alpha\}$.
    It is easy to see that each $S_\alpha\in\mathrm{REF}$
    but $S=\cap_\alpha S_\alpha=\{v_1\}\not\in\mathrm{REF}$.
\end{proof}

\section{An Answer to Informal Question \ref{mainqstn}}
\label{inclusivespeciessectn}

% One important desideratum of species is that every organism should belong to
% a species. From Section \ref{objectivespeciessection} we know that if we
% require species to satisfy the identical ancestor point axiom and the
% convexity axiom
% then this determines species,
% in the sense that if any two species have infinite overlap
% of \red{generators}, then those two species
% are the same modulo $\sim$.
% It remains to choose particular $\sim$-equivalence classes, and we
% will do so by choosing them as large as possible, but in order to prove such
% maximal equivalence classes exist, we need to further require
% the common ancestor axiom and the reflection axiom.
% We will show that the species thus obtained satisfy the desideratum that
% every organism inhabits some such species.

\begin{definition}
\label{atomicspeciesdefn}
    By an
    \red{$\mathrm{IAP}\cap\mathrm{CONV}\cap\mathrm{CA}\cap\mathrm{REF}$-maximal
    set}
    we mean a nonempty subset $S$ of $G$
    such that:
    \begin{enumerate}
        \item
        $S$ satisfies the identical ancestor point axiom.
        \item
        $S$ satisfies the convexity axiom.
        \item
        $S$ \red{has} the common ancestor \red{property}.
        \item
        $S$ \red{has} the reflection \red{property}.
        \item
        $S$ is maximal: no strict superset of $S$ satisfies 1--4.
    \end{enumerate}
\end{definition}

\red{Note that the common ancestor property implies connectivity,
thus every
$\mathrm{IAP}\cap\mathrm{CONV}\cap\mathrm{CA}\cap\mathrm{REF}$-maximal
set is a specieslike cluster.}

\red{One could in fact consider
    $\mathrm{IAP}\cap\mathrm{CONV}\cap\mathrm{CA}\cap\mathrm{REF}$-maximal
    sets to be species, thereby obtaining an abstract, formal species
    notion. \redder{B}y the following theorem, this would
    be a major improvement
over our earlier attempt, in \cite{alexander2013infinite}, to apply infinitary combinatorics to species definition\footnote{\red{There, we defined an \emph{infinitary species} to be
a set of organisms which is (i) infinite, (ii) ancestrally closed (i.e., it contains
every ancestor of every member of itself), and (iii) minimal with respect to
(i) and (ii).}}.
If we did this, we \redder{might} suffer the following unintended consequence:
all of our nonhuman ancestors would be included in the human species
\redder{(see Example \ref{atomicspeciesexample} part 2 below)}.
Mathematically, this flaw is negligible, since the set
of our nonhuman ancestors is finite. Nevertheless, biologists might
still consider this flaw fatal. Thus, it might be more appropriate
to consider
$\mathrm{IAP}\cap\mathrm{CONV}\cap\mathrm{CA}\cap\mathrm{REF}$-maximal
sets to be chains of species
\redder{(not to be confused with the chains of Section \ref{genericitysection})}.
For now, we choose to avoid such controversies.}

\begin{theorem}
\label{atomicspeciesexistencethm}
    \red{(A positive answer to Informal Question \ref{mainqstn})}
    For every $v\in G$, there is an
    \red{$\mathrm{IAP}\cap\mathrm{CONV}\cap\mathrm{CA}\cap\mathrm{REF}$-maximal
    set}
    containing $v$.
\end{theorem}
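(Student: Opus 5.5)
The plan is to combine Theorem \ref{upwardgenericpositiveresultslemma} (part 4) with Theorem \ref{abstractspeciesexistencetheorem} (part 1). Let $T=\mathrm{IAP}\cap\mathrm{CONV}\cap\mathrm{CA}\cap\mathrm{REF}$. By Theorem \ref{upwardgenericpositiveresultslemma}, $T$ is upward generic. So if we exhibit a single $S_0\in T$ with $v\in S_0$, then Theorem \ref{abstractspeciesexistencetheorem} gives a maximal element $S\in T$ with $S\supseteq S_0\ni v$. Being maximal in $T$ means no element of $T$ properly contains $S$, i.e.\ no strict superset of $S$ satisfies conditions 1--4. Moreover $S$ is nonempty, so $S$ is exactly an $\mathrm{IAP}\cap\mathrm{CONV}\cap\mathrm{CA}\cap\mathrm{REF}$-maximal set in the sense of Definition \ref{atomicspeciesdefn}.

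The only real work, and the main obstacle, is constructing the seed $S_0$. The singleton $\{v\}$ is in $\mathrm{IAP}\cap\mathrm{CONV}\cap\mathrm{CA}$, but it fails $\mathrm{REF}$ whenever $v$ has infinitely many descendants. So I split into two cases.

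\emph{Case 1: $v$ has finitely many descendants.} Take $S_0=\{v\}$. Convexity holds because nothing is both an ancestor and a descendant of $v$ (birthdates strictly increase along edges). The identical ancestor point axiom holds trivially since $S_0$ is finite. The common ancestor property holds with common ancestor $v$. The reflection property holds vacuously.

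\emph{Case 2: $v$ has infinitely many descendants.} By K\"onig's Lemma (Lemma \ref{koniglemma}) there is an infinite directed path $v=v_1,v_2,\ldots$, and I take $S_0=\{v_1,v_2,\ldots\}$.
\begin{itemize}
\item Common ancestor property: $v_1$ is an ancestor of every other $v_i$.
\item Convexity: if $u$ has an ancestor $v_i$ and a descendant $v_j$ in $S_0$, then $u$ need not lie on the path, since another route from $v_i$ to $v_j$ could pass through $u$. So I take instead the convex hull: the set of all $u$ having an ancestor in $S_0$ and a descendant in $S_0$, together with $S_0$ itself. This hull is convex, because an element between two hull members is between two path members.
\item Identical ancestor point axiom: for each hull element $u$ there is some $v_j$ that is a descendant of $u$, hence so are all $v_k$ with $k\ge j$. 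Every hull element with birthdate after $t(v_j)$ has a descendant $v_k$ with $k$ large, but is it a descendant of $u$? Not necessarily. So I instead use the path ordering itself. For $u$ in the hull with $u$ an ancestor of $v_j$, a hull element $x$ is an ancestor of some $v_m$. Without loss of generality $x$ is a descendant of some $v_i$ with $i\ge j$; then $x$ is a descendant of $v_j$ and hence of $u$, whenever $x\neq v_k$ for $k<j$. The hull elements that are descendants only of early $v_i$ but not of $v_j$ are born before $t(v_m)$ for some bounded $m$? That is not automatic, so here I might simply accept a descendant-closed choice.
\end{itemize}

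For Case 2, a cleaner fix is to use only ancestors along the path: the hull elements are exactly those $u$ with $v_i<u<v_j$ for some $i<j$. Any such $u$ is a descendant of $v_1$, and if $u$ is an ancestor of $v_j$ then each $v_k$ with $k\ge j$ is a descendant of $u$. Non-descendants of $u$ in the hull lie between $v_i$ and some $v_m$ without passing through $u$. I would argue finiteness by taking $S_0$ to be the hull and proving IAP via the key observation that every hull element $x$ born after $t(v_j)$ is a descendant of some $v_i$ with $i\ge j$, else its hull witness $v_i$ has $i<j$. If that step resists, the fallback is to pick the path greedily so that each $v_{i+1}$ is a child of $v_i$ through which all but finitely many later hull members pass. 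Reflection holds for the hull since each member has infinitely many hull descendants $v_k$. This IAP verification for the seed is where I expect the difficulty.
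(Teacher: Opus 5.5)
Your outer reduction is correct and is the paper's: $\mathcal T=\mathrm{IAP}\cap\mathrm{CONV}\cap\mathrm{CA}\cap\mathrm{REF}$ is upward generic, so everything hinges on finding one seed in $\mathcal T$ that contains $v$. Your Case 1 is fine, and you are right that the convex hull of an infinite path from $v$ always lies in $\mathrm{CONV}\cap\mathrm{CA}\cap\mathrm{REF}$. Case 2, however, has a genuine gap: the hull of an arbitrary K\"onig path can violate IAP, and your ``key observation'' is false.

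For a counterexample, let $G$ have vertices $v_1,v_2,\ldots$ and $w_1,w_2,\ldots$, edges $v_k\to v_{k+1}$, $v_1\to w_1$, $w_n\to w_{n+1}$ and $w_n\to v_{n+2}$, and birthdates $t(v_k)=k$, $t(w_n)=n+\frac32$. This is an infinite biosphere, and $v_1,v_2,\ldots$ is a legitimate K\"onig path from $v=v_1$. Each $w_n$ has the ancestor $v_1$ and the descendant $v_{n+2}$ on the path, so the hull is all of $G$. Yet the only ancestors of $w_n$ are $v_1,w_1,\ldots,w_{n-1}$. So the hull contains infinitely many descendants of $v_2$ (namely $v_3,v_4,\ldots$) and infinitely many non-descendants of $v_2$ (namely $w_1,w_2,\ldots$). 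In particular, the $w_n$ are born after $t(v_2)$ but descend from no $v_i$ with $i\geq 2$.

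Your fallback does not repair this. Which vertices lie in the hull depends on the entire path, so ``choose $v_{i+1}$ so that all but finitely many later hull members pass through it'' is circular. You give no local rule that would know, here, to step to $w_1$ rather than $v_2$ (the path $v_1,w_1,w_2,\ldots$ is its own hull and does work).

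The missing idea is to obtain the seed by minimality rather than by an explicit path. Let $v^*$ be the family of sets having $v$ as common ancestor, and take a minimal element $S$ of $\mathcal T_0=\mathrm{CONV}\cap\mathrm{REF}\cap v^*$. This exists by Theorem \ref{abstractspeciesexistencetheorem}(2), because $\mathcal T_0$ is downward generic (Theorem \ref{downwardgenericityresultslemma}, Lemma \ref{closureunderwedgelemma}, and a direct check for $v^*$) and contains the clade of $v$. Then $S\in\mathrm{CA}$ automatically, and IAP follows from your own hull idea applied inside $S$.

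Suppose some $v'\in S$ had infinitely many descendants and infinitely many non-descendants in $S$. A K\"onig argument starting at $v$, using convexity of $S$, gives an infinite path $Q\subseteq S$ consisting of non-descendants of $v'$. The hull of $Q$ lies in $\mathcal T_0$, is contained in $S$ by convexity, and contains no descendant of $v'$. Since $S$ contains infinitely many descendants of $v'$, the hull is a proper subset of $S$, contradicting minimality.

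Put in your terms: the hull of a path from $v$ satisfies IAP as soon as no path from $v$ has a strictly smaller hull. Zorn's lemma applied to the downward generic family $\mathcal T_0$ is what produces such a minimal hull, which repeated rerouting by hand does not obviously achieve.
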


\begin{proof}
    Let $\mathcal T=\mathrm{IAP}\cap\mathrm{CONV}
        \cap\mathrm{CA}\cap\mathrm{REF}$.
    By Theorem \ref{upwardgenericpositiveresultslemma}, $\mathcal T$
    is upward generic.
    If we can exhibit some $S\in \mathcal T$
    such that $v\in S$,
    then the theorem is immediate by
    Theorem \ref{abstractspeciesexistencetheorem} (part 1).
    Exhibiting such an $S$ is surprisingly nontrivial in general.

    Let $v^*$ be the set of those $G$-subsets $S$ such that
    $v$ is the common ancestor of $S$ (in other words,
    $S\in v^*$ if and only if $v\in S$ and for every $w\in S$
    with $w\not=v$, $w$ is a descendant of $v$).
    Let $\mathcal T_0=\mathrm{CONV}\cap\mathrm{REF}\cap v^*$.

    It is straightforward to show that $v^*$ is downward generic.
    By Theorem \ref{downwardgenericityresultslemma}
    and Lemma \ref{closureunderwedgelemma}, $\mathcal T_0$ is downward
    generic.

    Claim 1: $\mathcal T_0\not=\emptyset$.
    Let $S_v$ be the set consisting of $v$ and all $v$'s
    descendants. It is straightforward to see $S_v\in\mathcal T_0$,
    proving the claim.

    By Theorem \ref{abstractspeciesexistencetheorem} (part 2),
    $\mathcal T_0$ contains a minimal element $S$.
    Since $v\in S$ (because $S\in v^*$), to complete the proof of the theorem,
    it will suffice to show $S\in\mathcal T$.
    Since $S\in v^*$, clearly $S\in\mathrm{CA}$.
    To show $S\in \mathcal T$, it remains only to show
    $S\in\mathrm{IAP}$.

    Assume $S\not\in\mathrm{IAP}$.

    Then there is
    some organism $v'\in S$ such that $S$ contains infinitely
    many descendants of $v'$ and infinitely many non-descendants of $v'$.

    Claim 2: There exists an infinite path $Q$ in $S$ starting at $v$,
    all of whose members are non-descendants of $v'$.
    This is another K\"onig's lemma-style argument.
    Since $v$ is the common ancestor of $S$ and $v$ has only
    finitely many children (Definition \ref{infinitebiospheredefn}),
    some child $v_1$ of $v$ must have infinitely many descendants in $S$
    that are non-descendants of $v'$. So $v_1$ must be a non-descendant
    of $v'$ lest all of $v_1$'s descendants would be descendants of $v'$.
    Likewise, since $v_1$ has only finitely many children, $v_1$ must have
    some child $v_2$ which has infinitely many descendants in $S$ that
    are non-descendants of $v'$. And so on forever, defining a path
    $v,v_1,v_2,\ldots$. Since $S$ satisfies the convexity axiom,
    each $v_i\in S$. This proves the claim.

    \red{Let $T$ consist of $Q$ along with all organisms that have both
    an ancestor and a descendant in $Q$.}

    Claim 3: $T$ contains no descendant of $v'$.
    
    \red{By construction, $Q$ contains no descendant of $v'$, so any
    descendant $w$ of $v'$ in $T$ would have to have both an ancestor
    $w'$ and a descendant $w''$ in $Q$.
    But then $w''\in Q$ would be a descendant of $v'$, absurd.}

    Claim 4: $T\in\mathrm{CONV}\cap\mathrm{REF}\cap v^*$.

    The convexity axiom and the inclusion of $v$ as common ancestor
    are straightforward.
    For the reflection \red{property}, it is easy \red{to prove} that
    every vertex in $T$ has infinitely many descendants in \red{$Q$}.

    Claim 5: $T\subseteq S$. This follows by the definition of $T$ and
    the convexity axiom for $S$.

    Claims 3--5, and the fact that $S$ contains infinitely many
    descendants of $v'$, together contradict the minimality of $S$. Absurd.
\end{proof}

\red{We do not currently know of any qualitatively different sets of constraints
which would positively answer Informal Question \ref{mainqstn}. We would be
particularly interested in a solution not requiring the $\mathrm{CA}$ property.}

\begin{example}
\label{atomicspeciesexample}
    \red{Here are a few examples of
    $\mathrm{IAP}\cap\mathrm{CONV}\cap\mathrm{CA}\cap\mathrm{REF}$-maximal
    sets.}
    \begin{enumerate}
        \item
        \red{(No splits; see Figure \ref{example1figure})
        Suppose that $G$ is partitioned
        into sets $G_1,G_2,\ldots$ (thought of as \emph{generations}),
        each $G_i$ consisting of organisms $v_{i,1},\ldots,v_{i,k}$ (where
        $k$ is fixed), with an edge directed from $v_{i,p}$ to $v_{j,q}$ iff
        $j=i+1$ and $|p-q|\leq1$
        \redder{(if $k>2$ then some vertices have $>2$ parents, which is
        biologically unrealistic, but we chose this example for its
        mathematical elegance)}. Then $G$ contains exactly $k$
        $\mathrm{IAP}\cap\mathrm{CONV}\cap\mathrm{CA}\cap\mathrm{REF}$-maximal
        sets,
        each consisting of $v_{1,\ell}$ and all descendants of $v_{1,\ell}$
        (for $\ell\in\{1,\ldots,k\}$). These $k$ sets are all
        mutually equivalent modulo the equivalence relation $\sim$ of
        Section \ref{objectivespeciessection}.}
        \item
        \red{(A single two-way permanent split;
        see Figure \ref{example2figure}, left)
        Suppose $G$ has a common ancestor, and, after a certain time,
        permanently splits in two (with a two-way permanent split),
        and each sub-branch has its own common ancestor. Assuming no
        further splitting occurs, then $G$ will contain one
        $\mathrm{IAP}\cap\mathrm{CONV}\cap\mathrm{CA}\cap\mathrm{REF}$-maximal
        set
        for each branch. Both of these sets will also contain
        all organisms born before the split.
        The fact that the pre-split organisms inhabit
        both species simultaneously does not violate the Objective Species
        Theorem, because there are only finitely many such organisms.}
        \item
        \red{(A single one-way permanent split;
        see Figure \ref{example2figure}, right)
        Let $G$ be as in the previous example, except that the split in
        question is one-way instead of two-way. Then the
        $\mathrm{IAP}\cap\mathrm{CONV}\cap\mathrm{CA}\cap\mathrm{REF}$-maximal
        sets will
        be as in the previous example, except for the species corresponding
        to the branch receiving incoming arrows from the opposite branch.
        This set does \emph{not} contain pre-split organisms, because
        if it did, the convexity axiom would force it to also include vertices
        from the opposite branch, which would violate the identical ancestor
        point axiom.}
        \item
        \red{(A more interesting example)
        Figure \ref{example4figure} shows an infinite biosphere $G$ with
        both one- and two-way
        permanent splits---including splits nested inside the
        branches of other splits---and
        corresponding
        $\mathrm{IAP}\cap\mathrm{CONV}\cap\mathrm{CA}\cap\mathrm{REF}$-maximal
        sets.}
        \item
        \red{(Another type of one-way permanent split)
        Figure \ref{example5figure} shows an infinite biosphere $G$ with
        an inner population and two outer populations. Organisms from the
        two outer populations periodically contribute children to the
        inner population, which children do not initially have any parents
        from the inner population, but proceed to reproduce with their
        inner-population contemporaries. The figure also shows seven of
        the infinitely many resulting
        $\mathrm{IAP}\cap\mathrm{CONV}\cap\mathrm{CA}\cap\mathrm{REF}$-maximal
        sets. Each child contributed to
        the inner population by the two outer populations is necessarily
        the common ancestor of its own such set (it cannot share
        such a set with either of its parents, lest that set
        would include infinitely many members from all three
        populations---by the convexity axiom and reflection property---which would
        violate the identical ancestor point axiom). Modulo $\sim$, there
        are only three equivalence classes of
        $\mathrm{IAP}\cap\mathrm{CONV}\cap\mathrm{CA}\cap\mathrm{REF}$-maximal
        sets.}
    \end{enumerate}
\end{example}

\begin{figure}
    \begin{center}
        \includegraphics[scale=0.5]{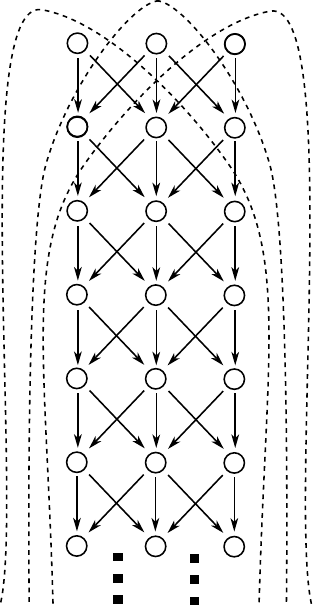}
    \end{center}
    \caption{\red{Three
    $\mathrm{IAP}\cap\mathrm{CONV}\cap\mathrm{CA}\cap\mathrm{REF}$-maximal
    sets, all equivalent (in the sense of
    Definition \ref{simequivalencerelndefn}), in the simple infinite biosphere of
    Example \ref{atomicspeciesexample} part 1 with $k=3$.
    \redder{The fact that some vertices have $>2$ parents is of course
    biologically unrealistic, but the example was chosen for its mathematical
    elegance.}}}
    \label{example1figure}
\end{figure}

\begin{figure}
    \begin{center}
        \includegraphics[scale=0.5]{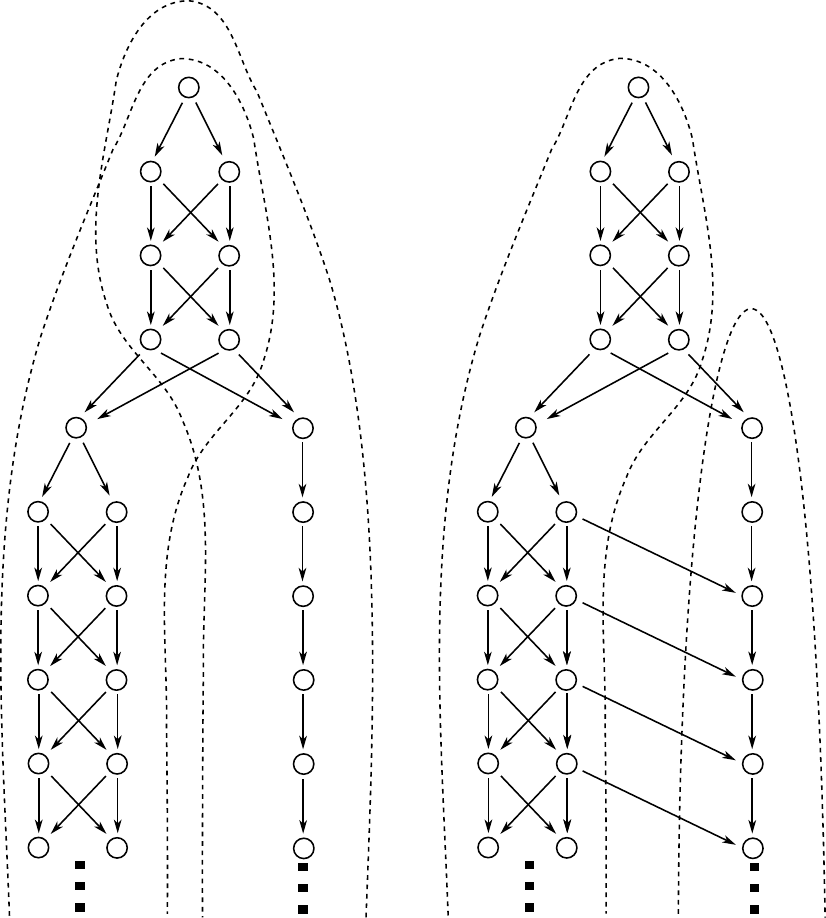}
    \end{center}
    \caption{\red{Left: an infinite biosphere consisting of a single two-way
    permanent split, and the two corresponding
    $\mathrm{IAP}\cap\mathrm{CONV}\cap\mathrm{CA}\cap\mathrm{REF}$-maximal sets.
    Right: an infinite biosphere consisting of a single one-way permanent
    split, and the two corresponding
    $\mathrm{IAP}\cap\mathrm{CONV}\cap\mathrm{CA}\cap\mathrm{REF}$-maximal sets.}}
    \label{example2figure}
\end{figure}

\begin{figure}
    \begin{center}
        \includegraphics[scale=0.5]{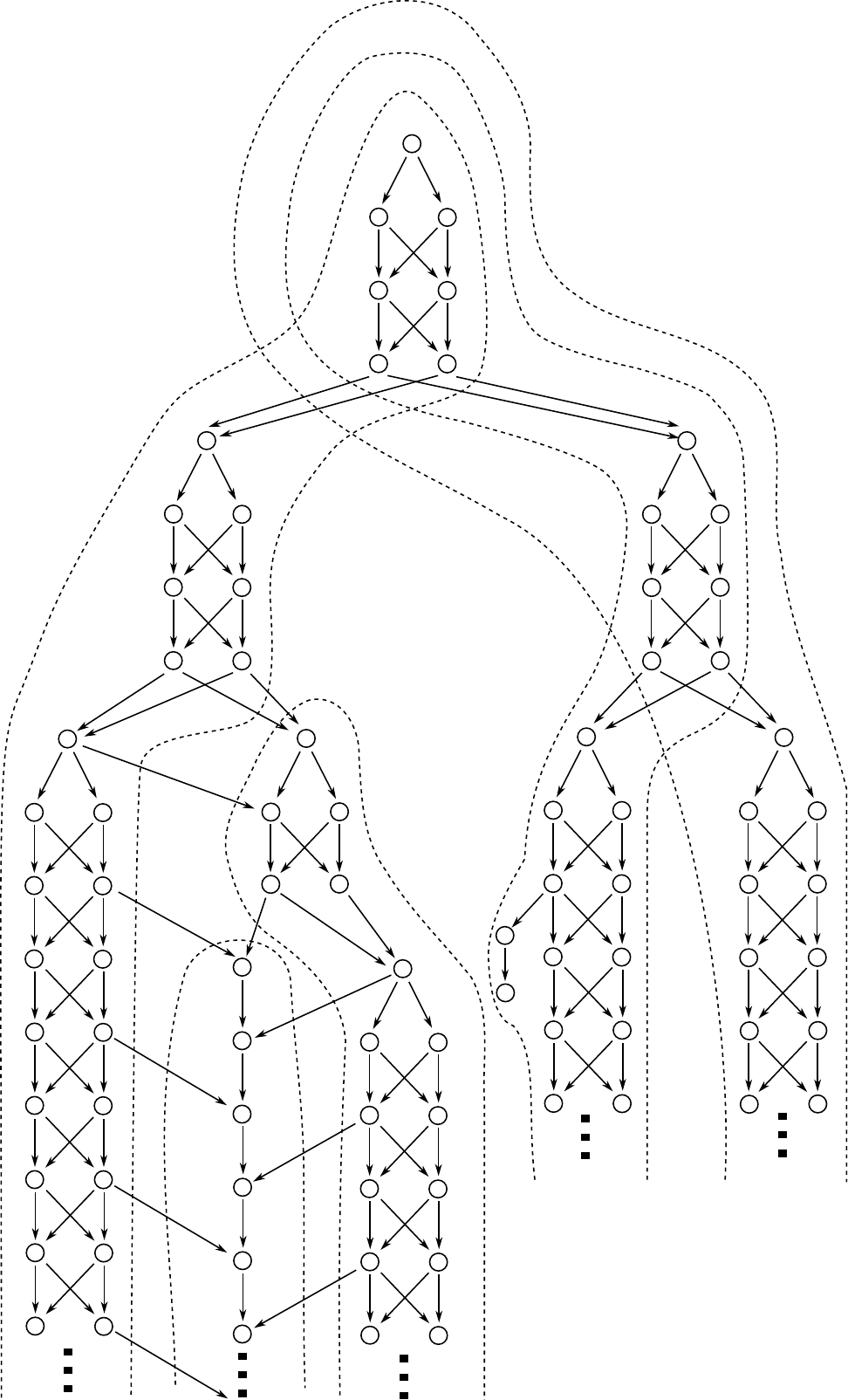}
    \end{center}
    \caption{\red{An infinite biosphere with various permanent splits
    (both one- and two-way), some of which are nested within others;
    and the five corresponding
    $\mathrm{IAP}\cap\mathrm{CONV}\cap\mathrm{CA}\cap\mathrm{REF}$-maximal
    sets.}}
    \label{example4figure}
\end{figure}

\begin{figure}
    \begin{center}
        \includegraphics[scale=0.5]{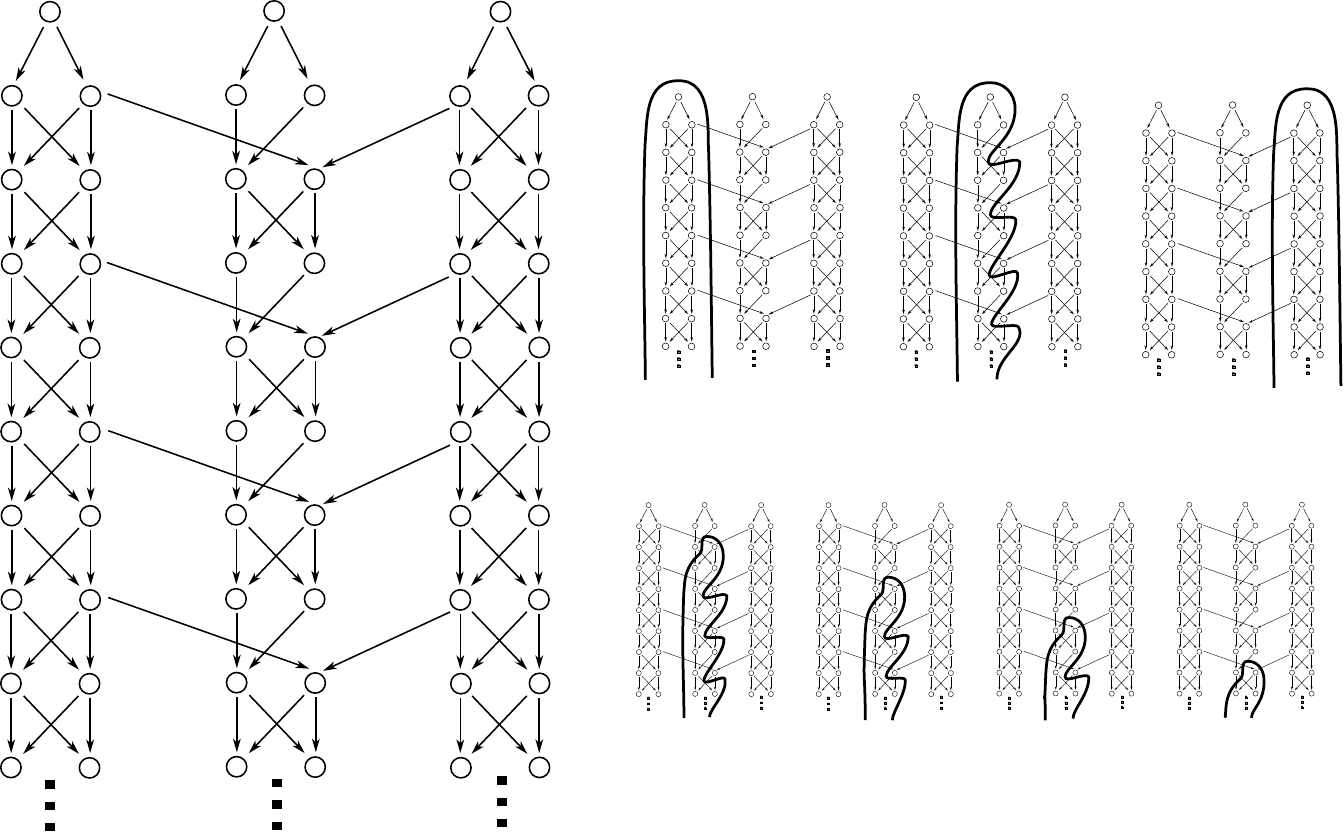}
    \end{center}
    \caption{\red{Left: An infinite biosphere made up of two outer
    populations and an inner population. Organisms from the outer
    populations periodically contribute children to the inner
    population, which children have no parents from the inner
    population. Right: Seven of the infinitely many resulting
    $\mathrm{IAP}\cap\mathrm{CONV}\cap\mathrm{CA}\cap\mathrm{REF}$-maximal
    sets (but modulo $\sim$, there are only three
    equivalence classes thereof).}}
    \label{example5figure}
\end{figure}

\red{It can be shown that
the set of
$\mathrm{IAP}\cap\mathrm{CONV}\cap\mathrm{CA}\cap\mathrm{REF}$-maximal
sets in $G$ (and the set of $\sim$-equivalence classes
thereof)
can be \redder{of} any positive integer \redder{size}
or can be countably or uncountably infinite.}

\red{Finally, we give examples to show that Theorem \ref{atomicspeciesexistencethm}
would fail if any one of $\mathrm{REF}$, $\mathrm{CA}$, or $\mathrm{CONV}$
were removed from Definition \ref{atomicspeciesdefn}.
The theorem would be trivially true but useless
if $\mathrm{IAP}$ were removed
(for then the maximal sets would be precisely the clades of
parentless organisms).}

\begin{example}
\label{necessityexample}
    \red{(Necessity of $\mathrm{CONV}$,
    $\mathrm{CA}$, and $\mathrm{REF}$)}
    \begin{itemize}
        \item
        \red{Call a set $S\subseteq G$ an
        \emph{$\mathrm{IAP}\cap\mathrm{CONV}\cap\mathrm{CA}$-maximal set}
        if $S$ is a maximal element of
        $\mathrm{IAP}\cap\mathrm{CONV}\cap\mathrm{CA}$.
        If $G$ is as in Figure \ref{negativefigure} part a,
        then $v_1$ is not a member of
        any $\mathrm{IAP}\cap\mathrm{CONV}\cap\mathrm{CA}$-maximal set.}
        \item
        \red{Call a set $S\subseteq G$ an
        \emph{$\mathrm{IAP}\cap\mathrm{CONV}\cap\mathrm{REF}$-maximal set}
        if $S$ is a maximal element of
        $\mathrm{IAP}\cap\mathrm{CONV}\cap\mathrm{REF}$.
        If $G$ is as in Figure \ref{negativefigure} part b,
        then $v_1$ is not a member of any
        $\mathrm{IAP}\cap\mathrm{CONV}\cap\mathrm{REF}$-maximal set.
        Alternatively, for a slightly more biologically plausible
        infinite biosphere (not involving infinitely many parentless
        organisms),
        the same statement holds for the top middle vertex in
        Figure \ref{example5figure}.}
        \item
        \red{Call a set $S\subseteq G$ an
        \emph{$\mathrm{IAP}\cap\mathrm{CA}\cap\mathrm{REF}$-maximal set}
        if $S$ is a maximal element of
        $\mathrm{IAP}\cap\mathrm{CA}\cap\mathrm{REF}$.
        If $G$ is as in Figure \ref{negativefigure} part c,
        then $v_2$ is not a member of any
        $\mathrm{IAP}\cap\mathrm{CA}\cap\mathrm{REF}$-maximal set.}
    \end{itemize}
\end{example}

\section{Replies to Anticipated Objections against the Identical Ancestor
Point Axiom}
\label{objectionssectn}

\subsection{Intuitive arguments}

It seems at least plausible that the reader's great-grandfather's
descendants might perpetuate indefinitely. But it does not seem plausible
that the reader's great-grandfather's descendants will eventually include
all living humans, as there is nothing particularly special about the
reader's great-grandfather (the reader's great-grandfather is not Genghis
Khan or some other figure famous for extreme fertility). This seems to
contradict the IAP \red{axiom}.

The above objection mistakenly assumes there is something
special about being a person whose descendants eventually include all living
humans. But the theory makes no predictions about this status
being special in any way. The reader (or the reader's great-grandfather)
might have this status, but probably a significant proportion of all their
contemporaries also have the same status, so it is really nothing special.
To quote Socrates \cite{theaetetus}:
``When his [the philosopher's]
companions become lyric on the subject of great
families, and exclaim at the noble blood of one who can point to seven wealthy
ancestors, he thinks that such praise comes of a dim and limited vision, an
inability, through lack of education, to take a steady view of the whole, and
to calculate that every single man has countless hosts of ancestors, near and
remote, among whom are to be found, in every instance, rich men and beggars,
kings and slaves, Greeks and foreigners, by the thousand.''

A similar objection is that it seems inconceivable that someday all living
humans will visibly resemble one particular individual alive today
(for example, the author). But the identical ancestor point axiom
is about genealogical ancestry, and implies nothing about genetics.

\subsection{The ring species argument}

Certain species known as \emph{ring species} are partitioned into
subsets \red{$S_1,\ldots,S_n$}, with the property that
an o\redder{r}ganism in $S_i$ can only \red{have children in} $S_j$
\redder{when
$|i-j|\leq1$}. It seems plausible that
the descendants of some
member $x$ of $S_i$ might continue to proliferate indefinitely, and yet it
seems implausible for all the members of $S_j$ to eventually be
descendants of $x$, if
\redder{$|i-j|$ is large}. And this would seem to invalidate
the IAP axiom.

We can address this objection using an inductive argument. The apparent
implausibility hinges on \redder{$i$ being far from $j$}.
If $j=i\pm 1$
then the
likelihood of all of $S_j$ eventually being descended from $x$ is much
greater. But once enough time has passed that every member of $S_{i\pm 1}$
is descended from $x$, it is no longer necessary to prove all of $S_j$
is eventually descended from $x$: it becomes sufficient to prove all of $S_j$ is
eventually descended from some $y$ which inhabits $S_{i\pm 1}$ when all of
$S_{i\pm 1}$ is descended from $x$.
Since $i\pm 1$ and $j$ are \redder{closer than} $i$ and $j$ are
(for suitable choice of $\pm$), we can repeat this argument inductively
to reduce the problem to the case where $i=j\pm 1$.
\red{See also \cite{rohde2004modelling} for more discussion of how
geographic structure (such as rings) slows, but does not prevent,
reaching an identical ancestor point.}

\section{Summary and Conclusion}
\label{summarysection}

We are concerned with the problem of identifying species within the biosphere
using nothing besides the parent-child relationship graph and birthdates.
This seems to be
subjective, depending on many arbitrary choices, but in
Section \ref{objectivespeciessection} we show that if we require that species
satisfy the identical ancestor point axiom and the convexity axiom,
then, modulo a natural equivalence relation, this
subjectivity is reduced.

The identical ancestor point axiom, discussed in
Section \ref{speciesaxiomssection},
is the property that no member of the species can
both be ancestor to infinitely many members of that species and also
non-ancestor to infinitely many members of that species.
We gave five informal arguments that this is biologically plausible
as a law which species should satisfy (the mathematical results in the
paper are true whether or not species really do satisfy this axiom, or even
whether or not species exist at all).

\red{By a \emph{specieslike cluster} we mean a connected
set of organisms satisfying
the identical ancestor point axiom and the convexity axiom.
Some specieslike clusters are too small to be realistic species
candidates, so this motivates us to consider maximal specieslike
clusters. Unfortunately, there is no guarantee that every organism
inhabits a maximal specieslike cluster. This, in turn, motivates us
to ask what additional constraints could be imposed, not too
biologically implausible, in order to guarantee every organism
inhabits at least one maximal specieslike cluster subject to those
extra constraints. We give one example of such a constraint-set in
Theorem \ref{atomicspeciesexistencethm}.}

\section*{Data availability statement}

This is a theory paper, so there is no relevant experimental or otherwise
empirical data.

\section*{Compliance with ethical standards}

Disclosure of potential conflicts of interest: The author does not have
any conflicts of interest to disclose.

Research involving Human Participants and/or Animals and Informed consent:
The submission does
not involve experimental research involving human participants and/or animals.

\bibliographystyle{plain}
\bibliography{bio.bib}
\end{document}